\documentclass[american,aps,prx,reprint,superscriptaddress, longbibliography,floatfix]{revtex4-2}
\usepackage[T1]{fontenc}
\usepackage{inputenc}
\usepackage{color}
\usepackage{babel}
\usepackage{physics}
\usepackage{mathtools}
\usepackage{bbding}
\usepackage{pifont}
\usepackage{textcomp}
\usepackage{wasysym}
\usepackage{amsthm}
\usepackage{nicefrac}
\usepackage{wasysym}
\usepackage{dsfont}
\usepackage{amsmath}
\usepackage{bbm}
\usepackage{amssymb}
\usepackage{bbold}
\usepackage{graphicx}
\usepackage{enumitem}
\definecolor{darkblue}{rgb}{0.0, 0.0, 0.55}

\newtheorem{theorem}{Theorem}
\newtheorem*{theorem*}{Theorem}

\newtheorem{fact}{Fact}

\newtheorem*{example*}{Example}%
\newtheorem*{remark*}{Remark}%
\newtheorem{definition}{Definition}%

\def\tr{\operatorname{tr}}
\def\supp{\operatorname{supp}}
\def\St{\operatorname{St}}
\def\Ch{\operatorname{Ch}}
\def\Pos{\operatorname{Pos}}
\def\id{\operatorname{id}}

\usepackage[colorlinks=true,
            allcolors=blue]{hyperref}

\hypersetup{
     colorlinks   = true,
     linkcolor    = blue,
     citecolor    = red,
     urlcolor     = blue,
     }

\raggedbottom 

\begin{document}

\title{Maximum entropy principle for quantum processes}

\author{Siddhartha Das}
\email{das.seed@iiit.ac.in}
\affiliation{q4i, Centre for Quantum Science and Technology, International Institute of Information Technology Hyderabad, Gachibowli 500032, Telangana, India}

\author{Ujjwal Sen}\email{ujjwal@hri.res.in}
\affiliation{Harish-Chandra Research Institute, A CI of Homi Bhabha National Institute, Chhatnag Road, Jhunsi 211019, Uttar Pradesh, India}

\begin{abstract}
The maximum entropy principle, as applied to quantum systems,  is a fundamental prescript positing that for a quantum system for which we only have partial knowledge, the maximum entropy state consistent with the partial knowledge is a valuable choice as the system's state. An intriguing result is that in case the only prior knowledge is of a fixed  energy, the maximum entropy state turns out to be the thermal state, a ubiquitous state in several arenas, especially in  statistical mechanics. We extend the consequences of this principle from static quantum states to dynamic quantum processes. We establish that a quantum channel attains maximal output entropy under a fixed energy constraint if and only if it is an absolutely thermalizing channel, where the fixed output is the thermal state corresponding to that energy. Our results have potential implications for understanding the informational and thermodynamic utility of quantum channels under physical constraints. As an application, we examine the consequences for private randomness distillation from fixed energy constrained quantum processes.
\end{abstract}

\maketitle
\textit{Introduction}.--- The informational aspects of quantum systems and processes provide essential insights into both foundational principles of quantum theory and architect of emerging technologies~\cite{Ben03,HOW05,MBD+17,Das19,DBWH21}. Their significance is further underscored by the role of information theory in linking quantum mechanics with thermodynamics, enabling the study of energetics of quantum systems through thermo-informational frameworks~\cite{PSW06,BHO+13,CGG+16,Auf22}. A central concept in these frameworks is entropy, which quantifies the degree of randomness or uncertainty inherent in a system~\cite{DGP24}. The maximum entropy principle, introduced by Jaynes~\cite{Jay57a,Jay57b}, has emerged as a widely used tool in the study of thermo-informational aspects of quantum systems~\cite{GE16,BRL+19,AABS19,MSE+24}. It serves as a foundational guideline for selecting the most unbiased probability distributions consistent with known constraints, making it a natural choice for analyzing quantum systems where often the information is incomplete~\cite{CRP90,AR99,Zim08,GXLK21,BGJ23,SAVA25}.

A core and long-standing question in quantum physics is why and how thermalization emerges in quantum systems (see e.g.,~\cite{RDO08,BCH11,Ben17,MIKU18,MPS18,PB21,DMW25}), despite the fundamentally reversible nature of quantum theory due to unitary evolution of closed systems. Given the physical assumption that the  energy of the observable part of the universe is fixed by nature, and in the absence of any additional prior knowledge about the quantum process under observation, what is the appropriate way to describe the behavior of such a process? Quantum processes are the source of information carriers as well as transformers. They are formally described by completely positive, trace-preserving linear maps, also called quantum channels. To this end, we provide a general extension, or channelized analogue, of the well-known assertion of the maximum entropy principle for quantum states~\cite{Jay57a,PB21} (along with Fact~\ref{fact:max-ent-states}).

We establish the following key result, demonstrating that the absolutely thermalizing process emerges naturally as a consequence of the maximum entropy principle. Here, we define the absolutely thermalizing channel as one that maps any arbitrary input state to a unique, fixed thermal state.
\begin{theorem*}[Informal statement for Theorem~\ref{thm:max-ent}]
 Among all channels constrained to have a fixed maximum channel output energy $E$, the channel entropy is maximized if and only if the channel is absolutely thermalizing, with its fixed output being a thermal state of energy $E$.
\end{theorem*}
Quantum processes within the observable universe appear to tend toward absolute thermalization. That is, for any state preparation of a non-isolated quantum system, the system eventually thermalizes. Given that the only initial information known about these quantum processes is their fixed energy, the maximum entropy principle provides an alternative explanation for absolute thermalization. Regardless of how a quantum system is initially prepared, it evolves toward a thermal state, which is best described by processes that maximize entropy subject to the constraint of fixed energy.

 The entropy of a quantum channel captures the minimum uncertainty in its output when conditioned on a reference system~\cite{DJKR06,GW21}. It determines the thermodynamic work cost of erasing the logical output of a bipartite unitary operation, especially when ancillary outputs are accessible to the eraser~\cite{BSC+25}. Given that physical transformations of quantum states are describable as quantum channel, the thermo-informational aspects of quantum channels determine the energetics and informational utility of quantum processors and devices~\cite{FBB21,CYR21,BD26}. Quantum channels encompass quantum gates (circuits), measurements, and states (density operators)~\cite{Kra83,Sud85}. Our work can have implications for the dynamical resource theory~\cite{DBWH21,SHS24} of thermodynamics~\cite{BHO+13,BHN+15,SdS+20,BGDb} where the thermodynamic resourcefulness of a quantum channel is determined by its distinguishability from the absolutely thermalizing channel~\cite{SPSD25,BD26}.

We begin by presenting the standard notations, definitions, and foundational results as the Preliminaries, which are essential for formulating, proving, and analyzing the implications of our main result (Theorem~\ref{thm:max-ent}). We then establish our main result and discuss key consequences of the maximum entropy principle for quantum channels, and explore their implications on the private randomness capacity~\cite{YHW19} of quantum channels. Finally, we conclude by summarizing our main findings and outlining directions for future research.

\color{black}
\textit{Preliminaries}.--- We consider separable Hilbert spaces that are finite-dimensional. Let $\St(A)$ and $\Pos(A)$ denote the sets of all density operators and positive semidefinite operators, respectively, defined on the Hilbert space $\mathcal{H}_A$. The Hilbert space $\mathcal{H}_{AB}$ of a composite system $AB$ is given as $\mathcal{H}_{AB}:= \mathcal{H}_{A}\otimes\mathcal{H}_B$. We may simply denote $\mathcal{H}_A$ as $A$. The dimension of $A$ is $|A|:=\dim(A)$. $\mathbbm{1}_A$ is the identity operator on $A$. $\omega_A$ denotes operator on $A$ and for brevity we may denote it as $\omega$. For $\rho\in\St(AB)$, $\rho_B=\tr_A(\rho_{AB})$ is the reduced state on $B$. A quantum channel $\mathcal{N}_{A'\to A}$ is a completely positive, trace-preserving linear map with input on $A'$ and output on $A$. For a linear map $\mathcal{N}_{A'\to A}$ and $\psi\in\St(RA')$, $\mathcal{N}(\psi_{RA'})=\id_R\otimes\mathcal{N}_{A'\to A}(\psi_{RA'})$. Let $\Ch(A',A)$ denote the set of all quantum channels $\mathcal{N}_{A'\to A}$. Let $\mathcal{R}^{\mathbbm{1}}_{A'\to A}$ denote the uniformly mixing map, $\mathcal{R}_{A'\to A}(\rho_{A'})=\tr(\rho_{A'})\mathbbm{1}_A$ for all $\rho\in\Pos(A')$. A replacer channel $\mathcal{R}^{\omega}_{A'\to A}$ always outputs a fixed state $\omega_A$ for all possible input states $\rho_{A'}$, i.e., $\mathcal{R}^\omega_{A'\to A}(\rho_{A'})=\omega_{A}$ for all $\rho\in\St(A')$.

The Choi operator of a linear map $\mathcal{N}_{A'\to A}$ is $\Gamma^{\mathcal{N}}_{RA}=\id_R\otimes\mathcal{N}_{A'\to A}(\Gamma_{RA'})$, where $R\simeq A'$, $\id_R$ is the identity channel from $R\to R$,  $\Gamma_{RA'}:=\sum_{i,j=0}^{d-1}\ket{ii}\bra{jj}_{RA'}$ with $d=\min\{|R|,|A'|\}$. The Choi operator of a replacer channel $\mathcal{R}^{\omega}_{A'\to A}$ is a product operator $\mathbbm{1}_R\otimes\omega_A$.

The von Neumann entropy, simply referred to as entropy, of a quantum state $\rho_A$ is $S(A)_{\rho}:=S(\rho_A):=-\tr[\rho\log\rho]$. The quantum relative entropy between $\rho\in\St(A),\sigma\in\Pos(A)$ is given by $D(\rho\Vert\sigma):=\tr\left[\rho(\log\rho-\log\sigma)\right]$ if $\supp(\rho)\subseteq\supp(\sigma)$ else it is $+\infty$. For $\rho,\sigma\in\Pos(A)$, we have $D(\rho\Vert\sigma)\geq 0$ whenever $\tr[\rho]\geq \tr[\sigma]$, and $D(\rho\Vert\sigma)=0$ if and only if $\rho=\sigma$. $S(A)_{\rho}=-D(\rho_A\Vert\mathbbm{1}_A)$. For a quantum state $\rho_{AB}$, the von Neumann conditional entropy $S(A|B)_{\rho}$ quantifies uncertainty of $A$ conditioned on $B$, $S(A|B)_{\rho}=-\inf_{\sigma\in\St(B)}D(\rho_{AB}\Vert\mathbbm{1}_A\otimes\sigma_B)$, and $S(A|B)_{\rho}=S(AB)_{\rho}-S(B)_{\rho}$~\cite{CA97} (also see \cite{SB20}).

Let $\widehat{H}_A$ denote the Hamiltonian of a quantum system $A$. The  energy of $\rho\in\St(A)$ is given by $\langle \widehat{H}\rangle_{\rho}:=\tr[\widehat{H} \rho]$. For a Hamiltonian $\widehat{H}_A$, the thermal state $\gamma^{\beta}$ (also called Gibbs state) is given as $\gamma^\beta_A= {\exp(-\beta \widehat{H}_A)}/{Z^\beta_A}$~\cite{Len78}, where $\beta:=(k_BT)^{-1}$ is the inverse temperature and $Z^{\beta}_A:=\tr[\exp(-\beta \widehat{H}_A)]$. The entropy of a thermal state $\gamma^\beta$ is $S(\gamma^{\beta})=\beta \langle \widehat{H}\rangle_{\gamma^\beta} + \log Z^{\beta}$.
\begin{fact}[\cite{Jay57a}]~\label{fact:max-ent-states}
Consider a quantum system $A$ with Hamiltonian $\widehat{H}_A$. The maximum entropy among all states with a given  energy is attainable if and only if the state is the thermal (Gibbs) state of given  energy,
\begin{equation}
    \max_{\rho\in\St(A):~\tr[\widehat{H}\rho]=E} S(\rho)= S(\gamma^\beta),
\end{equation}
for $ \gamma^\beta:= \frac{\exp(-\beta \widehat{H})}{\tr\left[\exp(-\beta \widehat{H})\right]},~\langle \widehat{H}\rangle_{\gamma^{\beta}}=E$.
\end{fact}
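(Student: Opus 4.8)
The plan is to reduce the statement to the non-negativity of the quantum relative entropy, $D(\rho\Vert\sigma)\geq 0$ with equality if and only if $\rho=\sigma$, which is recorded in the preliminaries. The central feature I would exploit is that the logarithm of the thermal state is affine in the Hamiltonian: since $\gamma^\beta=\exp(-\beta\widehat{H})/Z^\beta$, we have $\log\gamma^\beta=-\beta\widehat{H}-(\log Z^\beta)\mathbbm{1}$. Because $\widehat{H}$ is bounded, $\gamma^\beta$ has full support, so $D(\rho\Vert\gamma^\beta)$ is finite for every $\rho\in\St(A)$ and no support issues arise.

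First I would expand $D(\rho\Vert\gamma^\beta)=\tr[\rho\log\rho]-\tr[\rho\log\gamma^\beta]=-S(\rho)+\beta\,\tr[\widehat{H}\rho]+\log Z^\beta$. The key step is then to impose the energy constraint $\tr[\widehat{H}\rho]=E$: the cross-term collapses to the constant $\beta E$, so that $D(\rho\Vert\gamma^\beta)=-S(\rho)+\beta E+\log Z^\beta$ for every admissible $\rho$. Since $\beta$ is fixed by the requirement $\langle\widehat{H}\rangle_{\gamma^\beta}=E$, the quantity $\beta E+\log Z^\beta$ is a single constant independent of the choice of $\rho$.

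Non-negativity of relative entropy then yields $S(\rho)\leq\beta E+\log Z^\beta$ directly. To identify the right-hand side, I would substitute $\rho=\gamma^\beta$, which satisfies the constraint by construction and gives $D(\gamma^\beta\Vert\gamma^\beta)=0$, hence $S(\gamma^\beta)=\beta E+\log Z^\beta$. Combining, $S(\rho)\leq S(\gamma^\beta)$ for all $\rho$ with $\tr[\widehat{H}\rho]=E$, and the equality clause of the relative-entropy condition shows the maximum is attained only at $\rho=\gamma^\beta$.

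Since $\widehat{H}$ is bounded and $\gamma^\beta$ is full-rank, I expect no serious analytic obstacle; the only point requiring care is the existence and uniqueness of the inverse temperature $\beta$ realizing mean energy $E$, but this is already furnished by the hypothesis (and by the monotonicity of $\langle\widehat{H}\rangle_{\gamma^\beta}$ in $\beta$). The conceptual crux---and the single idea that makes the argument work---is that fixing the mean energy is precisely the linear constraint that decouples the $\rho$-dependence of the cross-term $\tr[\rho\log\gamma^\beta]$, turning the variational problem into a bare application of Gibbs' inequality.
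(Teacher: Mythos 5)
Your proof is correct, and it is the canonical argument. Note, however, that the paper never proves this statement at all: it is imported as a Fact with a citation to Jaynes and then used as a black box in the proof of Theorem~\ref{thm:max-ent}, so there is no internal proof to compare yours against. Your route---expanding $D(\rho\Vert\gamma^\beta)=-S(\rho)+\beta\tr[\widehat{H}\rho]+\log Z^\beta$, observing that the constraint $\tr[\widehat{H}\rho]=E$ makes the cross term constant over the feasible set, evaluating at $\rho=\gamma^\beta$ to identify the constant as $S(\gamma^\beta)$, and invoking $D(\rho\Vert\sigma)\geq 0$ with equality if and only if $\rho=\sigma$ (which the paper records in its preliminaries)---is exactly the standard Gibbs'-inequality proof, and it correctly delivers both the bound and the uniqueness clause. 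The one caveat is inherited from the paper's own setup rather than introduced by you: for a bounded Hamiltonian on an infinite-dimensional separable Hilbert space, $Z^\beta=\tr[\exp(-\beta\widehat{H})]$ diverges (since $\exp(-\beta\widehat{H})\geq e^{-\beta\Vert\widehat{H}\Vert}\mathbbm{1}$), so the thermal state exists only when the partition function is finite, e.g., in finite dimension; under that standing assumption, which the statement itself requires to even define $\gamma^\beta$, your argument is complete.
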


The absolutely thermalizing channel $\mathcal{T}^\beta_{A'\to A}$ is a replacer channel such that $\mathcal{T}^\beta(\cdot)=\tr(\cdot)\gamma^\beta_A$.

\textit{Relative entropy between channels}: The quantum relative entropy between two completely positive linear maps $\mathcal{N}_{A'\to A}, \mathcal{M}_{A'\to A}$ is defined as $D[\mathcal{N}\Vert\mathcal{M}]=\sup_{\psi\in\St(RA')}D(\mathcal{N}(\psi_{RA'})\Vert\mathcal{M}(\psi_{RA'}))$~\cite{CMW16}, and it suffices to consider optimization over pure states $\psi_{RA'}$ with $R\simeq A'$. The quantum relative entropy between channels finds an operational meaning in the task of channel discrimination in Stein's setting~\cite{CMW16}. The inclusion of a reference system for the channel input can, in general, improve the discrimination between the channels compared to the setting in which no reference system is allowed~(cf.~\cite{Das19,DW19}).

\textit{Entropy of a quantum channel}: The (von Neumann) entropy of an arbitrary quantum channel $\mathcal{N}_{A'\to A}$ is defined as the negative of the relative entropy between $\mathcal{N}_{A'\to A}$ and $\mathcal{R}^{\mathbbm{1}}$~\cite{GW21} (see also~\cite{DJKR06,Yua19,SPSD25})
\begin{align}
    S[\mathcal{N}]:=& -D[\mathcal{N}\Vert\mathcal{R}^{\mathbbm{1}}]\\
    =&\inf_{\psi\in\St(RA')}\left[S(\mathcal{N}(\psi_{RA'}))-S(\psi_R)\right]\label{eq:ent-amor}\\
    =&\inf_{{\psi}\in\St(RA')}S(A|R)_{\mathcal{N}(\psi)},
\end{align}
where it suffices to optimize over pure $\psi_{RA'}$ states.
For a replacer quantum channel $\mathcal{N}_{A'\to A}=\mathcal{R}^{\omega}_{A'\to A}$, its minimum output entropy $\inf_{\rho\in\St(A')}S(\mathcal{N}(\rho))$~\cite{GGL+04,DTG17} is equal to its entropy, $S[\mathcal{R}^{\omega}]=S(\omega)=\inf_{\rho\in\St(A')}S(\mathcal{N}(\rho))$. The entropy of a quantum channel quantifies the completely bounded minimum channel output entropy~\cite{DJKR06} and also finds operational meanings in channel merging~\cite{GW21} and purity distillation~\cite{YHW19,LY20,BGD25}. It satisfies desirable axiomatic properties~\cite{GW21}:
\begin{enumerate}
    \item Non-decreasing under the action of uniformity sub-preserving superchannel $\Omega$, i.e., for $\Omega(\mathcal{R}^\mathbbm{1})\leq \mathcal{R}^{\mathbbm{1}}$~\cite{SPSD25}
    \begin{equation}
        S[\Omega(\mathcal{N})]\geq S[\mathcal{N}];
    \end{equation}
    \item Additivity: $S[\mathcal{N}\otimes\mathcal{M}]=S[\mathcal{N}]+S[\mathcal{M}]$ for any two channels $\mathcal{N}_{A'\to A}$ and $\mathcal{M}_{B'\to B}$.
    \item Reduces to the entropy of the fixed output state for a replacer channel $\mathcal{R}^\omega_{A'\to A}$, $S[\mathcal{R}^\omega]=S(\omega_A)$.
    \item Maximum if and only if the channel is uniformly mixing $\mathcal{R}^{\pi}_{A'\to A}$ for $\pi_A:=\frac{1}{|A|}\mathbbm{1}_A$, i.e., $S[\mathcal{R}^{\pi}]=\log |A|$, and minimum if and only if the channel is an isometry channel.
    \item Continuity: The difference between the entropies of quantum channels approaches zero as the diamond distance between the channels approaches zero~\cite{BGD25}.
\end{enumerate}

\textit{Main Result}.--- The maximum entropy principle is a fundamental concept in quantum statistical mechanics, asserting that when our knowledge of a quantum system is incomplete, the most reasonable description of its state is the one that maximizes entropy while remaining consistent with prior knowledge of the state~\cite{Jay57a}. A crucial implication of this principle is that if our only prior knowledge about a quantum system is its fixed  energy, then the maximum entropy state is precisely the thermal state (Fact~\ref{fact:max-ent-states}). Here we give a formal definition of the  energy of a quantum channel (cf.~\cite{Win17,vL25,BGD25}) and then prove and discuss key consequences of the maximum entropy principle for quantum channels.
\begin{definition}
    The  energy of a quantum channel $\mathcal{N}_{A'\to A}$ is defined as the maximum channel output energy. Let $\widehat{H}_A$ be the bounded Hamiltonian of $A$, then 
\begin{equation}
    \langle \widehat{H}\rangle_{\mathcal{N}}:=\sup_{\rho\in\St(A')}\tr[\widehat{H}_A\mathcal{N}(\rho_{A'})].
\end{equation}
\end{definition}
We note that, for $\widehat{H}_{RA}=\mathbbm{1}_R\otimes\widehat{H}_A+\widehat{H}_R\otimes\mathbbm{1}_A$, the maximum channel output energy is equal to (cf.~\cite{BGD25})
\begin{align}
     \langle \widehat{H}\rangle_{\mathcal{N}}&:=\sup_{\rho\in\St(A')}\tr[\widehat{H}_A\mathcal{N}(\rho_{A'})]\\ &=\sup_{\rho\in\St(RA')}\left(\langle\widehat{H}\rangle_{\id\otimes\mathcal{N}}-\langle\widehat{H}\rangle_{\id}\right),\label{eq:en-amor}
\end{align}
for the identity channel $\id_{R}$ on the reference $R$. Eq.~\eqref{eq:en-amor} has expression similar to Eq.~\eqref{eq:ent-amor} and in that sense the energy of a quantum channel can be deemed as the completely bounded maximum channel output energy, where the reference is non-interacting with the channel output. The energy of a quantum channel is additive under tensor-product of quantum channels with non-interacting outputs. That is, for $\widehat{H}_{AB}=\widehat{H}_A\otimes\mathbbm{1}_B+\mathbbm{1}_A\otimes\widehat{H}_B$ and quantum channel $\mathcal{N}_{A'\to A}\otimes\mathcal{M}_{B'\to B}$, we have $\langle \widehat{H}_{AB}\rangle_{\mathcal{N}\otimes\mathcal{M}}=\langle \widehat{H}_A\rangle_{\mathcal{N}}+\langle \widehat{H}_B\rangle_{\mathcal{M}}$. For a replacer channel $\mathcal{R}^\omega_{A'\to A}$, we have $ S[\mathcal{R}^{\omega}]=S(\omega),~\langle \widehat{H}\rangle_{\mathcal{R}^{\omega}}=\langle \widehat{H}\rangle_{{\omega}}$.

Let $\mathcal{T}^{\beta(E)}_{A'\to A}$ denote an absolutely thermalizing channel $\mathcal{T}^{\gamma^{\beta}}_{A'\to A}$ of the  energy $E$, i.e., $\langle \widehat{H}\rangle_{\mathcal{T}^{\beta(E)}}=\langle \widehat{H}\rangle_{{\gamma^{\beta(E)}}}=E$. For a given output thermal state $\gamma^\beta_A$, the absolutely thermalizing channel is unique. We first state a key implication of the maximum entropy principle for quantum processes. We then prove that the quantum channel that maximizes the entropy of quantum channels with fixed maximum  output energy is absolutely thermalizing.

\textit{Maximum entropy principle for quantum processes}: The maximum entropy principle suggests that the most accurate description of an ongoing quantum process is one that maximizes entropy while remaining consistent with any prior knowledge we have about it. If we assume our only prior knowledge about quantum processes in the observable part of the universe is that they have a fixed  energy, then, according to this principle, these processes are best characterized as absolutely thermalizing channels.  This means that no matter the initial configuration, such a channel will inevitably drive any input quantum state toward a fixed output: a thermal state consistent with the process's  energy. In essence, based on the following theorem, every initial quantum state will naturally evolve to this thermal equilibrium because the underlying quantum channel inherently leads to thermalization.

\begin{theorem}\label{thm:max-ent}
    The maximum entropy among all quantum channels $\mathcal{N}_{A'\to A}$ with a fixed  energy $\langle \widehat{H}\rangle_{\mathcal{N}}=E$, is attained if, and only if, the channel is absolutely thermalizing $\mathcal{T}^{\beta(E)}_{A'\to A}$, where $\langle \widehat{H}\rangle_{\mathcal{T}^{\beta(E)}}=E$,
    \begin{equation}
        \max_{\substack{\ \mathcal{N}\in\Ch(A',A):\\ \ \langle \widehat{H}\rangle_{\mathcal{N}}=E}} S[\mathcal{N}]=S[\mathcal{T}^{\beta(E)}].
    \end{equation}
\end{theorem}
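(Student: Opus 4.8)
The plan is to reduce the channel statement to the state statement of Fact~\ref{fact:max-ent-states} using two elementary ingredients: subadditivity of von Neumann entropy and the variational (relative-entropy) form of the state principle. Starting from $0 \le D(\rho\Vert\gamma^\beta) = -S(\rho) + \beta\langle \widehat{H}\rangle_{\rho} + \log Z^\beta$, every state $\rho_A$ satisfies $S(\rho) \le \beta\langle \widehat{H}\rangle_{\rho} + \log Z^\beta$, with equality if and only if $\rho = \gamma^\beta$. In the physical regime $\beta \ge 0$ this yields, for every state with $\langle \widehat{H}\rangle_{\rho} \le E$, the ``one-sided'' bound $S(\rho) \le \beta E + \log Z^\beta = S(\gamma^\beta)$, saturated only by $\rho = \gamma^\beta$. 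I would use this form throughout, since the constraint $\langle \widehat{H}\rangle_{\mathcal{N}} = E$ forces \emph{every} output $\mathcal{N}(\rho_{A'})$ to have mean energy at most $E$.

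For the upper bound I would start from $S[\mathcal{N}] = \inf_{\psi\in\St(RA')} S(A|R)_{\mathcal{N}(\psi)}$ and apply subadditivity in the form $S(A|R)_{\omega} = S(AR)_\omega - S(R)_\omega \le S(A)_\omega$ to $\omega = \mathcal{N}(\psi)$. With $\psi_{A'} = \tr_R\psi$ a legitimate input state this gives
\begin{equation}
S(A|R)_{\mathcal{N}(\psi)} \le S(A)_{\mathcal{N}(\psi)} = S\big(\mathcal{N}(\psi_{A'})\big) \le S(\gamma^\beta),
\end{equation}
the last step by the state bound. Taking the infimum over $\psi$ yields $S[\mathcal{N}] \le S(\gamma^\beta)$ for every channel with $\langle \widehat{H}\rangle_{\mathcal{N}} = E$.

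Achievability is immediate from the preliminaries: the absolutely thermalizing channel $\mathcal{T}^\beta = \mathcal{R}^{\gamma^\beta}$ has $\langle \widehat{H}\rangle_{\mathcal{T}^\beta} = \langle \widehat{H}\rangle_{\gamma^\beta} = E$ and $S[\mathcal{T}^\beta] = S(\gamma^\beta)$, so it saturates the bound. For the ``only if'' direction, suppose $S[\mathcal{N}] = S(\gamma^\beta)$. Since every term of the infimum obeys $S(A|R)_{\mathcal{N}(\psi)} \le S(\gamma^\beta)$, an infimum equal to $S(\gamma^\beta)$ forces $S(A|R)_{\mathcal{N}(\psi)} = S(\gamma^\beta)$ for \emph{all} $\psi \in \St(RA')$. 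Evaluating on product inputs $\psi = \tau_R\otimes\sigma_{A'}$, where $S(A|R)_{\mathcal{N}(\psi)} = S(\mathcal{N}(\sigma))$, gives $S(\mathcal{N}(\sigma)) = S(\gamma^\beta)$ for every input $\sigma$; the equality condition of the state principle then forces $\mathcal{N}(\sigma) = \gamma^\beta$ for all $\sigma$, i.e.\ $\mathcal{N} = \mathcal{R}^{\gamma^\beta} = \mathcal{T}^\beta$.

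The step I expect to require the most care is not the algebra but the hypotheses under which the state bound is one-sided. The reduction to $S(\rho)\le S(\gamma^\beta)$ for all energy-$\le E$ states needs $\beta \ge 0$, i.e.\ $E$ below the infinite-temperature mean energy; for larger $E$ the maximizer is the maximally mixed state rather than $\gamma^\beta$, so the statement should be read in the regime where $\beta$ determined by $\langle \widehat{H}\rangle_{\gamma^\beta} = E$ is nonnegative. Secondary technical points, given that the Hilbert spaces are only separable, are justifying subadditivity and $S(A|R)\le S(A)$ when $S(R)$ or $S(AR)$ may be infinite (here the boundedness of $\widehat{H}_A$ and $S(\mathcal{N}(\psi_{A'})) \le S(\gamma^\beta) < \infty$ control the output side), and handling the supremum in $\langle \widehat{H}\rangle_{\mathcal{N}}$ when it is not attained, which affects only the ``$\le E$'' bookkeeping and not the argument itself.
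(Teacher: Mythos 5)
Your proof is correct in the regime you flag ($\beta \ge 0$), and although it shares the paper's skeleton---reduce the channel entropy to output entropies via $S(A|R)_{\omega} \le S(A)_{\omega}$, then invoke the state-level principle---it departs from the paper precisely at the step that matters, the ``only if'' direction, and your version is the more solid one. The paper argues uniqueness by claiming that $S[\mathcal{N}] = \inf_{\rho\in\St(A')} S(A)_{\mathcal{N}(\rho)}$ holds \emph{only if} $\mathcal{N}$ is a replacer channel, deducing pointwise decorrelation of $A$ from $R$ from the equality of two infima; that inference is not valid (the equality condition for $S(A)\ge S(A|R)$ is a statement about a fixed state, not about infima), and the claim itself is false: the measure-and-prepare channel $\mathcal{M}(\rho)=\bra{0}\rho\ket{0}\,\sigma_0+(1-\bra{0}\rho\ket{0})\,\sigma_1$ with $S(\sigma_0)>S(\sigma_1)$ satisfies $S[\mathcal{M}]=S(\sigma_1)=\inf_{\rho}S(\mathcal{M}(\rho))$ (by concavity of conditional entropy applied to its classical-quantum outputs) yet is not a replacer. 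Your route---forcing every term $S(A|R)_{\mathcal{N}(\psi)}$ of the infimum to equal $S(\gamma^\beta)$, specializing to product inputs, and concluding from $D(\mathcal{N}(\sigma)\Vert\gamma^\beta)=0$ that $\mathcal{N}(\sigma)=\gamma^\beta$ for every input $\sigma$---avoids this gap entirely, and as a bonus never requires the supremum defining $\langle\widehat{H}\rangle_{\mathcal{N}}$ to be attained, whereas the paper must pick an input $\rho^{(1)}$ attaining it. The price is your hypothesis $\beta\ge 0$: the paper's upper bound, which applies Fact~\ref{fact:max-ent-states} to the exact-energy output $\mathcal{N}(\rho^{(1)})$, works for either sign of $\beta$. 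But your caveat is not a defect of your method; it is a hypothesis genuinely missing from the theorem as stated. For $\beta<0$ take $\sigma_0=\mathbbm{1}_A/|A|$ and $\sigma_1=\gamma^\beta$ in the channel $\mathcal{M}$ above: then $\langle\widehat{H}\rangle_{\mathcal{M}}=E$ and $S[\mathcal{M}]=S(\gamma^\beta)$, yet $\mathcal{M}\neq\mathcal{T}^\beta$, so the ``only if'' part of the statement fails there, exactly as you suspected.
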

\begin{proof} 
For a quantum channel $\mathcal{N}_{A'\to A}$, we have~\cite{GW21,DJKR06}
    \begin{align}
        S[\mathcal{N}]& =\inf_{\rho\in\St(RA')}\left[S(RA)_{\mathcal{N}(\rho)}-S(R)_{\rho}\right]\label{eq:cb-ent-a}\\
        & \leq \inf_{\rho\in\St(A')}S(A)_{\mathcal{N}(\rho)}\label{eq:cb-ent-b}\\
        &\leq \sup_{\rho\in\St(A')}S(A)_{\mathcal{N}(\rho)}.\label{eq:cb-ent-c}
    \end{align}
It follows from Ineq.~\eqref{eq:cb-ent-b} that for an arbitrary channel $\mathcal{N}_{A'\to A}$, $S[\mathcal{N}]\leq S(\mathcal{N}(\rho_{A'}))$ for all $\rho\in\St{(A')}$. If $\rho^{(1)}\in\St{(A')}$ such that $\langle \widehat{H}\rangle_{\mathcal{N}}=\tr[\widehat{H}_A\mathcal{N}(\rho^{(1)}_{A'})]$, then also $S[\mathcal{N}]\leq S(\mathcal{N}(\rho^{(1)}_{A'}))$; the existence of such $\rho^{(1)}_{A'}$ uses the facts that $\St(A')$ is compact and bounded set for $|A'|<\infty$ and the energy function is continuous on the set of states. 

We note that for every quantum channel $\mathcal{N}_{A'\to A}$ with $\langle \widehat{H}\rangle_{\mathcal{N}}=E$, there exists some replacer channels $\mathcal{R}^{\omega(E)}_{A'\to A}$, where $\omega(E)$ is to denote $\langle \widehat{H}\rangle_{\omega}=E$. Also, for every quantum channel $\mathcal{N}_{A'\to A}$ with $\langle \widehat{H}\rangle_{\mathcal{N}}=E$, there exists an absolutely thermalizing channel $\mathcal{T}^{\beta(E)}_{A'\to A}$, where $\beta(E)$ is to denote $\langle \widehat{H}\rangle_{\gamma^{\beta}}=E$.

\allowdisplaybreaks

For a quantum state $\rho_{AB}$, we know that $S(A)_{\rho}\geq S(A|B)_{\rho}$; the inequality is saturated, $S(A)_{\rho}=S(A|B)_{\rho}$, if and only if $\rho_{AB}=\rho_{A}\otimes\rho_B$. This is because $S(A)_{\rho}=S(A|B)_{\rho}$ if and only if the quantum mutual information vanishes, $I(A;B)_{\rho}:=D(\rho_{AB}\Vert\rho_A\otimes\rho_B)=0$. We have
\begin{align}
  0
  &\leq \sup_{\psi\in\St(RA')} I(R;A)_{\mathcal{N}(\psi)} \\
  & = \sup_{\psi\in\St(RA')}\left[
      S(A)_{\mathcal{N}(\psi)}
      + S(R)_{\psi}
      - S(RA)_{\mathcal{N}(\psi)}
    \right] \\
  & \leq \sup_{\psi\in\St(A')} S(A)_{\mathcal{N}(\psi)}
      + \sup_{\psi\in\St(RA')}\left(-S(A|R)_{\mathcal{N}(\psi)}\right) \\
  & = \sup_{\psi\in\St(A')} S(A)_{\mathcal{N}(\psi)}
      - \inf_{\psi\in\St(RA')} S(A|R)_{\mathcal{N}(\psi)} \\
  & = \sup_{\psi\in\St(A')} S(A)_{\mathcal{N}(\psi)}
      - S[\mathcal{N}],
\end{align}
where the first inequality follows from the fact that $I(A;B)_{\rho}\geq 0$ for all quantum states $\rho_{AB}$. $S[\mathcal{N}]=\sup_{\psi\in\St(A')}S(A)_{\mathcal{N}(\psi)}$ implies that $\sup_{\psi\in\St(RA')}I(R;A)_{\mathcal{N}(\psi)}=0$, which is true if and only if the channel $\mathcal{N}$ is a replacer channel.

It holds from Ineq.~\eqref{eq:cb-ent-c} that for an arbitrary quantum channel $\mathcal{N}_{A'\to A}$, $S[\mathcal{N}]\leq  \sup_{\rho\in\St(A')}S(\mathcal{N}(\rho_{A'}))$, and the upper bound is achievable only for replacer channels $\mathcal{R}^{\sigma}_{A'\to A}$ with $\sigma_A$ such that $\sup_{\rho\in\St(A')}S(\mathcal{N}(\rho_{A'}))=S(\sigma_A)$. It then directly follows from Fact~\ref{fact:max-ent-states} that among all replacer channels $\mathcal{R}^{\omega}_{A'\to A}$ with $\langle \widehat{H}\rangle_{\mathcal{R}^{\omega}}=E$, the entropy is maximum if and only if the replacer channel is absolutely thermalizing $\mathcal{T}^{\beta}$ and $\langle \widehat{H}\rangle_{\gamma^{\beta}}=E$. This concludes the proof.
\end{proof}

The theorem statement above can be generalized to the cases where chemical potentials~\cite{HHW67,PB21} are also considered. Fact~\ref{fact:max-ent-states} will get modified by considering the generalized thermal (Gibbs) state, and Theorem~\ref{thm:max-ent} will be revised accordingly. We leave formal statement and rigorous proof for future work.

The absolutely thermalizing channel $\mathcal{T}^{\beta(E)}_{A'\to A}$ can be simulated via unitary SWAP gate and the thermal ancillary (bath) state, as (see e.g.,~\cite{BSC+25}) $  \mathcal{T}^{\beta(E)}_{A'\to A}(\cdot)=\tr_E\left[\mathcal{U}^{\rm SWAP}_{A'E'\to AE}(~\cdot\otimes\gamma^{\beta}_{E'})\right]$, 
where $\mathcal{U}^{\rm SWAP}_{A'E'\to AE}(\cdot)=U^{\rm SWAP}(\cdot)(U^{\rm SWAP})^{\dag}$ is unitary SWAP channel with $A',A$ denoting logical systems and $E',E$ denoting bath systems, $U^{\rm SWAP}_{A'E'\to AE}\ket{i}_{A'}\ket{j}_{E'}=\ket{j}_{A}\ket{i}_{E}$. Provided that the individual Hamiltonians $\widehat{H}_{A'},\widehat{H}_{E'}$ are identical, the total Hamiltonian $\widehat{H}_{A'+E'}$ is 
$\widehat{H}_{A'+E'}=\widehat{H}_{A'}\otimes\mathbbm{1}_{E'}+\mathbbm{1}_{A'}\otimes\widehat{H}_{E'}$,
and the unitary SWAP channel is energy-preserving operation, $\left[{U}^{\rm SWAP}_{A'E'\to AE},\widehat{H}_{A'+E'}\right]=0$. There are other physical methods that realize absolute thermalization, see for instances~\cite{SZS+05,RDO08,MIKU18,HMG19,MDP22}.

\textit{Private or intrinsic randomness}: There is a trade-off relation between the entropy $S[\mathcal{N}]$ of a quantum channel $\mathcal{N}_{A'\to A}$, for $|A|<\infty$, and its private randomness capacity $P_{\rm random}[\mathcal{N}]$~\cite{YHW19},~\cite[Eq.~(128)]{DGP24}
\begin{equation}\label{eq:e-p-tradeoff}
    S[\mathcal{N}]+P_{\rm random}[\mathcal{N}]=\log |A|.
\end{equation}
The private randomness capacity of a channel $\mathcal{N}_{A'\to A}$ is the maximum rate, in an asymptotic setting of \textit{i.i.d.} uses of the channel, at which the receiver accessing output $A$ can extract private randomness, against an eavesdropper accessing extension $E$ of an isometric channel extension $\mathcal{V}^{\mathcal{N}}_{A'\to AE}$ of the channel $\mathcal{N}$, when the sender sends states through the channel. Under the physically natural assumption of the energy constraint, the private randomness capacity is minimal for the absolutely thermalizing channel. The minimum private randomness capacity among all quantum channels $\mathcal{N}_{A'\to A}$ with a fixed  energy $\langle \widehat{H}\rangle_{\mathcal{N}}=E$, is attained if, and only if, the channel is absolutely thermalizing $\mathcal{T}^{\beta(E)}_{A'\to A}$,
    \begin{equation}
        \min_{\substack{\ \mathcal{N}\in\Ch(A',A):\\ \ \langle \widehat{H}\rangle_{\mathcal{N}}=E}} P_{\rm random}[\mathcal{N}]=P_{\rm random}[\mathcal{T}^{\beta(E)}].
    \end{equation}

The trade-off relation~\eqref{eq:e-p-tradeoff} resembles closely with the trade-off relation between the entropy $S(\rho_A)$ of a quantum state $\rho_A$ and its objective information $I(\rho_A):= D(\rho_A\Vert\pi_A)$~\cite{Zur01,HHO03,HHH+03,MDP22,SSG+25}, where $\pi_A:=\mathbbm{1}/|A|$ is the maximally mixed state and $|A|<\infty$,
\begin{equation}\label{eq:e-p-state}
    S(\rho)+I(\rho)=\log|A|. 
\end{equation}
The objective information of a state is related to the amount of extractable work~\cite{OHHH02,HHH+05} and intrinsic randomness~\cite{MCS+24} in the state. We can express the private randomness capacity $P_{\rm random}[\mathcal{N}]$ of a quantum channel $\mathcal{N}_{A'\to A}$ as $P_{\rm random}[\mathcal{N}]=D[\mathcal{N}\Vert\mathcal{R}^{\pi}]$, where $\mathcal{R}^{\pi}_{A'\to A}$ is the uniformly mixing channel or the completely depolarizing channel. The entropic quantity $D[\mathcal{N}\Vert\mathcal{R}^{\pi}]$ can be deemed as a measure of the intrinsic randomness of a quantum channel $\mathcal{N}_{A'\to A}$~\cite[Section~IV.A.]{DGP24} (see also~\cite{BGDb}), given that $D(\rho_A\Vert\pi_A)$ is a measure of the intrinsic randomness of a state $\rho_A$. Trade-off relations~\eqref{eq:e-p-state} and \eqref{eq:e-p-tradeoff} are state--channel analogue of each other. For an arbitrary replacer channel $\mathcal{R}^{\omega}_{A'\to A}$ we have $S[\mathcal{R}^{\omega}]=S(\omega)$ and $P_{\rm random}[\mathcal{R}^{\omega}]=I(\omega)$. We can arguably also interpret the absolute thermalization as a consequence of the \textit{minimum intrinsic randomness principle}, a quantum information-theoretic dual to the maximum entropy principle.

\textit{Discussion}.--- Understanding the thermo-informational aspects of quantum processes is crucial for both advancing quantum technologies and deepening our foundational knowledge of quantum mechanics. Information-theoretic quantities like entropy, conditional entropy, mutual information, and conditional mutual information are vital for discerning the causal structure and unique properties of these underlying quantum dynamics~\cite{DGP24,BD26}.

In this work, we have utilized the maximum entropy principle, a robust framework for drawing unbiased inferences from incomplete data, and applied it specifically to quantum channels. This provides an axiomatic approach to understanding absolute thermalization~\cite{RGE12,GE16,AABS19,MSE+24}. Our findings, which are based on the maximum entropy principle, offer new insights into the research area of quantum-to-classical transitions~\cite{PSW06,LC09,LWG21,SSK24}.

Furthermore, we have developed an argument for evaluating the thermodynamic and informational resourcefulness of a quantum channel under fixed energy constraints. This can be achieved by quantifying how much a given channel deviates from an absolutely thermalizing channel~\cite{SPSD25}. This approach holds promising implications for thermodynamic resource theories~\cite{BHO+13,SdS+20} related to quantum channels. Our main result imply that the absolutely thermal channel can be deemed free in the dynamical resource theory of athermality, thus paving path for introducing the dynamical resource theory of athermality~\cite{BGD25,BD26}.

\textit{Note added}: This topic is also studied in a parallel work by Philippe Faist and Sumeet Khatri, arXiv:2508.03993. Some exploratory suggestions in our work have been  recently studied in arXiv:2510.12790, arXiv:2510.23731, and arXiv:2604.01217, where the dynamical resource theory of athermality is introduced and its implications in quantum thermodynamics and quantum information processing are investigated. The athermality monotones of a quantum channel are defined as the generalized divergences between the channel with respect to the absolutely thermalizing channel, see~\cite[Eqs.~(67)-(70)]{SPSD25}.

\medskip

\textit{Acknowledgments}.--- The authors thank Himanshu Badhani, Karol Horodecki, and Uttam Singh for discussions. S.D. acknowledges support from the SERB (now ANRF), Department of Science and Technology, Government of India, under grant no. SRG/2023/000217 and the Ministry of Electronics and Information Technology (MeitY), Government of India, under grant no. 4(3)/2024-ITEA, and the National Science Centre, Poland, under grant Opus 25, 2023/49/B/ST2/02468. S.D. thanks Harish-Chandra Research Institute, Prayagraj (Allahabad), India and University of Gda\'nsk, Gda\'nsk, Poland for the hospitality during his visits.
\bibliography{output}

%apsrev4-2.bst 2019-01-14 (MD) hand-edited version of apsrev4-1.bst
%Control: key (0)
%Control: author (8) initials jnrlst
%Control: editor formatted (1) identically to author
%Control: production of article title (0) allowed
%Control: page (0) single
%Control: year (1) truncated
%Control: production of eprint (0) enabled
\begin{thebibliography}{70}%
\makeatletter
\providecommand \@ifxundefined [1]{%
 \@ifx{#1\undefined}
}%
\providecommand \@ifnum [1]{%
 \ifnum #1\expandafter \@firstoftwo
 \else \expandafter \@secondoftwo
 \fi
}%
\providecommand \@ifx [1]{%
 \ifx #1\expandafter \@firstoftwo
 \else \expandafter \@secondoftwo
 \fi
}%
\providecommand \natexlab [1]{#1}%
\providecommand \enquote  [1]{``#1''}%
\providecommand \bibnamefont  [1]{#1}%
\providecommand \bibfnamefont [1]{#1}%
\providecommand \citenamefont [1]{#1}%
\providecommand \href@noop [0]{\@secondoftwo}%
\providecommand \href [0]{\begingroup \@sanitize@url \@href}%
\providecommand \@href[1]{\@@startlink{#1}\@@href}%
\providecommand \@@href[1]{\endgroup#1\@@endlink}%
\providecommand \@sanitize@url [0]{\catcode `\\12\catcode `\$12\catcode `\&12\catcode `\#12\catcode `\^12\catcode `\_12\catcode `\%12\relax}%
\providecommand \@@startlink[1]{}%
\providecommand \@@endlink[0]{}%
\providecommand \url  [0]{\begingroup\@sanitize@url \@url }%
\providecommand \@url [1]{\endgroup\@href {#1}{\urlprefix }}%
\providecommand \urlprefix  [0]{URL }%
\providecommand \Eprint [0]{\href }%
\providecommand \doibase [0]{https://doi.org/}%
\providecommand \selectlanguage [0]{\@gobble}%
\providecommand \bibinfo  [0]{\@secondoftwo}%
\providecommand \bibfield  [0]{\@secondoftwo}%
\providecommand \translation [1]{[#1]}%
\providecommand \BibitemOpen [0]{}%
\providecommand \bibitemStop [0]{}%
\providecommand \bibitemNoStop [0]{.\EOS\space}%
\providecommand \EOS [0]{\spacefactor3000\relax}%
\providecommand \BibitemShut  [1]{\csname bibitem#1\endcsname}%
\let\auto@bib@innerbib\@empty
%</preamble>
\bibitem [{\citenamefont {Bennett}(2003)}]{Ben03}%
  \BibitemOpen
  \bibfield  {author} {\bibinfo {author} {\bibfnamefont {C.~H.}\ \bibnamefont {Bennett}},\ }\bibfield  {title} {\bibinfo {title} {Notes on {L}andauer's principle, reversible computation, and {M}axwell's demon},\ }\href {https://doi.org/https://doi.org/10.1016/S1355-2198(03)00039-X} {\bibfield  {journal} {\bibinfo  {journal} {Studies in History and Philosophy of Science Part B: Studies in History and Philosophy of Modern Physics}\ }\textbf {\bibinfo {volume} {34}},\ \bibinfo {pages} {501} (\bibinfo {year} {2003})},\ \bibinfo {note} {{Q}uantum Information and Computation}\BibitemShut {NoStop}%
\bibitem [{\citenamefont {Horodecki}\ \emph {et~al.}(2005{\natexlab{a}})\citenamefont {Horodecki}, \citenamefont {Oppenheim},\ and\ \citenamefont {Winter}}]{HOW05}%
  \BibitemOpen
  \bibfield  {author} {\bibinfo {author} {\bibfnamefont {M.}~\bibnamefont {Horodecki}}, \bibinfo {author} {\bibfnamefont {J.}~\bibnamefont {Oppenheim}},\ and\ \bibinfo {author} {\bibfnamefont {A.}~\bibnamefont {Winter}},\ }\bibfield  {title} {\bibinfo {title} {Partial quantum information},\ }\href {https://doi.org/10.1038/nature03909} {\bibfield  {journal} {\bibinfo  {journal} {Nature}\ }\textbf {\bibinfo {volume} {436}},\ \bibinfo {pages} {673–676} (\bibinfo {year} {2005}{\natexlab{a}})}\BibitemShut {NoStop}%
\bibitem [{\citenamefont {Majenz}\ \emph {et~al.}(2017)\citenamefont {Majenz}, \citenamefont {Berta}, \citenamefont {Dupuis}, \citenamefont {Renner},\ and\ \citenamefont {Christandl}}]{MBD+17}%
  \BibitemOpen
  \bibfield  {author} {\bibinfo {author} {\bibfnamefont {C.}~\bibnamefont {Majenz}}, \bibinfo {author} {\bibfnamefont {M.}~\bibnamefont {Berta}}, \bibinfo {author} {\bibfnamefont {F.}~\bibnamefont {Dupuis}}, \bibinfo {author} {\bibfnamefont {R.}~\bibnamefont {Renner}},\ and\ \bibinfo {author} {\bibfnamefont {M.}~\bibnamefont {Christandl}},\ }\bibfield  {title} {\bibinfo {title} {Catalytic decoupling of quantum information},\ }\href {https://doi.org/10.1103/PhysRevLett.118.080503} {\bibfield  {journal} {\bibinfo  {journal} {Physical Review Letters}\ }\textbf {\bibinfo {volume} {118}},\ \bibinfo {pages} {080503} (\bibinfo {year} {2017})}\BibitemShut {NoStop}%
\bibitem [{\citenamefont {Das}(2019)}]{Das19}%
  \BibitemOpen
  \bibfield  {author} {\bibinfo {author} {\bibfnamefont {S.}~\bibnamefont {Das}},\ }\href {https://arxiv.org/abs/1901.05895} {\bibinfo {title} {Bipartite quantum interactions: Entangling and information processing abilities}} (\bibinfo {year} {2019}),\ \bibinfo {note} {arXiv:1901.05895}\BibitemShut {NoStop}%
\bibitem [{\citenamefont {Das}\ \emph {et~al.}(2021)\citenamefont {Das}, \citenamefont {B\"auml}, \citenamefont {Winczewski},\ and\ \citenamefont {Horodecki}}]{DBWH21}%
  \BibitemOpen
  \bibfield  {author} {\bibinfo {author} {\bibfnamefont {S.}~\bibnamefont {Das}}, \bibinfo {author} {\bibfnamefont {S.}~\bibnamefont {B\"auml}}, \bibinfo {author} {\bibfnamefont {M.}~\bibnamefont {Winczewski}},\ and\ \bibinfo {author} {\bibfnamefont {K.}~\bibnamefont {Horodecki}},\ }\bibfield  {title} {\bibinfo {title} {Universal limitations on quantum key distribution over a network},\ }\href {https://doi.org/10.1103/PhysRevX.11.041016} {\bibfield  {journal} {\bibinfo  {journal} {Physical Review X}\ }\textbf {\bibinfo {volume} {11}},\ \bibinfo {pages} {041016} (\bibinfo {year} {2021})}\BibitemShut {NoStop}%
\bibitem [{\citenamefont {Popescu}\ \emph {et~al.}(2006)\citenamefont {Popescu}, \citenamefont {Short},\ and\ \citenamefont {Winter}}]{PSW06}%
  \BibitemOpen
  \bibfield  {author} {\bibinfo {author} {\bibfnamefont {S.}~\bibnamefont {Popescu}}, \bibinfo {author} {\bibfnamefont {A.~J.}\ \bibnamefont {Short}},\ and\ \bibinfo {author} {\bibfnamefont {A.}~\bibnamefont {Winter}},\ }\bibfield  {title} {\bibinfo {title} {Entanglement and the foundations of statistical mechanics},\ }\href {https://doi.org/10.1038/nphys444} {\bibfield  {journal} {\bibinfo  {journal} {Nature Physics}\ }\textbf {\bibinfo {volume} {2}},\ \bibinfo {pages} {754–758} (\bibinfo {year} {2006})}\BibitemShut {NoStop}%
\bibitem [{\citenamefont {Brand\~ao}\ \emph {et~al.}(2013)\citenamefont {Brand\~ao}, \citenamefont {Horodecki}, \citenamefont {Oppenheim}, \citenamefont {Renes},\ and\ \citenamefont {Spekkens}}]{BHO+13}%
  \BibitemOpen
  \bibfield  {author} {\bibinfo {author} {\bibfnamefont {F.~G. S.~L.}\ \bibnamefont {Brand\~ao}}, \bibinfo {author} {\bibfnamefont {M.}~\bibnamefont {Horodecki}}, \bibinfo {author} {\bibfnamefont {J.}~\bibnamefont {Oppenheim}}, \bibinfo {author} {\bibfnamefont {J.~M.}\ \bibnamefont {Renes}},\ and\ \bibinfo {author} {\bibfnamefont {R.~W.}\ \bibnamefont {Spekkens}},\ }\bibfield  {title} {\bibinfo {title} {Resource theory of quantum states out of thermal equilibrium},\ }\href {https://doi.org/10.1103/PhysRevLett.111.250404} {\bibfield  {journal} {\bibinfo  {journal} {Physical Review Letters}\ }\textbf {\bibinfo {volume} {111}},\ \bibinfo {pages} {250404} (\bibinfo {year} {2013})}\BibitemShut {NoStop}%
\bibitem [{\citenamefont {Cabello}\ \emph {et~al.}(2016)\citenamefont {Cabello}, \citenamefont {Gu}, \citenamefont {G\"uhne}, \citenamefont {Larsson},\ and\ \citenamefont {Wiesner}}]{CGG+16}%
  \BibitemOpen
  \bibfield  {author} {\bibinfo {author} {\bibfnamefont {A.}~\bibnamefont {Cabello}}, \bibinfo {author} {\bibfnamefont {M.}~\bibnamefont {Gu}}, \bibinfo {author} {\bibfnamefont {O.}~\bibnamefont {G\"uhne}}, \bibinfo {author} {\bibfnamefont {J.-A.}\ \bibnamefont {Larsson}},\ and\ \bibinfo {author} {\bibfnamefont {K.}~\bibnamefont {Wiesner}},\ }\bibfield  {title} {\bibinfo {title} {Thermodynamical cost of some interpretations of quantum theory},\ }\href {https://doi.org/10.1103/PhysRevA.94.052127} {\bibfield  {journal} {\bibinfo  {journal} {Physical Review A}\ }\textbf {\bibinfo {volume} {94}},\ \bibinfo {pages} {052127} (\bibinfo {year} {2016})}\BibitemShut {NoStop}%
\bibitem [{\citenamefont {Auff\`eves}(2022)}]{Auf22}%
  \BibitemOpen
  \bibfield  {author} {\bibinfo {author} {\bibfnamefont {A.}~\bibnamefont {Auff\`eves}},\ }\bibfield  {title} {\bibinfo {title} {Quantum technologies need a quantum energy initiative},\ }\href {https://doi.org/10.1103/PRXQuantum.3.020101} {\bibfield  {journal} {\bibinfo  {journal} {PRX Quantum}\ }\textbf {\bibinfo {volume} {3}},\ \bibinfo {pages} {020101} (\bibinfo {year} {2022})}\BibitemShut {NoStop}%
\bibitem [{\citenamefont {Das}\ \emph {et~al.}(2024)\citenamefont {Das}, \citenamefont {Goswami},\ and\ \citenamefont {Pandey}}]{DGP24}%
  \BibitemOpen
  \bibfield  {author} {\bibinfo {author} {\bibfnamefont {S.}~\bibnamefont {Das}}, \bibinfo {author} {\bibfnamefont {K.}~\bibnamefont {Goswami}},\ and\ \bibinfo {author} {\bibfnamefont {V.}~\bibnamefont {Pandey}},\ }\href {https://arxiv.org/abs/2410.01740} {\bibinfo {title} {Conditional entropy and information of quantum processes}} (\bibinfo {year} {2024}),\ \bibinfo {note} {arXiv:2410.01740}\BibitemShut {NoStop}%
\bibitem [{\citenamefont {Jaynes}(1957{\natexlab{a}})}]{Jay57a}%
  \BibitemOpen
  \bibfield  {author} {\bibinfo {author} {\bibfnamefont {E.~T.}\ \bibnamefont {Jaynes}},\ }\bibfield  {title} {\bibinfo {title} {Information theory and statistical mechanics},\ }\href {https://doi.org/10.1103/PhysRev.106.620} {\bibfield  {journal} {\bibinfo  {journal} {Physical Review}\ }\textbf {\bibinfo {volume} {106}},\ \bibinfo {pages} {620} (\bibinfo {year} {1957}{\natexlab{a}})}\BibitemShut {NoStop}%
\bibitem [{\citenamefont {Jaynes}(1957{\natexlab{b}})}]{Jay57b}%
  \BibitemOpen
  \bibfield  {author} {\bibinfo {author} {\bibfnamefont {E.~T.}\ \bibnamefont {Jaynes}},\ }\bibfield  {title} {\bibinfo {title} {Information theory and statistical mechanics. ii},\ }\href {https://doi.org/10.1103/PhysRev.108.171} {\bibfield  {journal} {\bibinfo  {journal} {Physical Review}\ }\textbf {\bibinfo {volume} {108}},\ \bibinfo {pages} {171} (\bibinfo {year} {1957}{\natexlab{b}})}\BibitemShut {NoStop}%
\bibitem [{\citenamefont {Gogolin}\ and\ \citenamefont {Eisert}(2016)}]{GE16}%
  \BibitemOpen
  \bibfield  {author} {\bibinfo {author} {\bibfnamefont {C.}~\bibnamefont {Gogolin}}\ and\ \bibinfo {author} {\bibfnamefont {J.}~\bibnamefont {Eisert}},\ }\bibfield  {title} {\bibinfo {title} {Equilibration, thermalisation, and the emergence of statistical mechanics in closed quantum systems},\ }\href {https://doi.org/10.1088/0034-4885/79/5/056001} {\bibfield  {journal} {\bibinfo  {journal} {Reports on Progress in Physics}\ }\textbf {\bibinfo {volume} {79}},\ \bibinfo {pages} {056001} (\bibinfo {year} {2016})}\BibitemShut {NoStop}%
\bibitem [{\citenamefont {Bera}\ \emph {et~al.}(2019)\citenamefont {Bera}, \citenamefont {Riera}, \citenamefont {Lewenstein}, \citenamefont {Khanian},\ and\ \citenamefont {Winter}}]{BRL+19}%
  \BibitemOpen
  \bibfield  {author} {\bibinfo {author} {\bibfnamefont {M.~N.}\ \bibnamefont {Bera}}, \bibinfo {author} {\bibfnamefont {A.}~\bibnamefont {Riera}}, \bibinfo {author} {\bibfnamefont {M.}~\bibnamefont {Lewenstein}}, \bibinfo {author} {\bibfnamefont {Z.~B.}\ \bibnamefont {Khanian}},\ and\ \bibinfo {author} {\bibfnamefont {A.}~\bibnamefont {Winter}},\ }\bibfield  {title} {\bibinfo {title} {Thermodynamics as a consequence of information conservation},\ }\href {https://doi.org/10.22331/q-2019-02-14-121} {\bibfield  {journal} {\bibinfo  {journal} {Quantum}\ }\textbf {\bibinfo {volume} {3}},\ \bibinfo {pages} {121} (\bibinfo {year} {2019})}\BibitemShut {NoStop}%
\bibitem [{\citenamefont {Abanin}\ \emph {et~al.}(2019)\citenamefont {Abanin}, \citenamefont {Altman}, \citenamefont {Bloch},\ and\ \citenamefont {Serbyn}}]{AABS19}%
  \BibitemOpen
  \bibfield  {author} {\bibinfo {author} {\bibfnamefont {D.~A.}\ \bibnamefont {Abanin}}, \bibinfo {author} {\bibfnamefont {E.}~\bibnamefont {Altman}}, \bibinfo {author} {\bibfnamefont {I.}~\bibnamefont {Bloch}},\ and\ \bibinfo {author} {\bibfnamefont {M.}~\bibnamefont {Serbyn}},\ }\bibfield  {title} {\bibinfo {title} {Colloquium: Many-body localization, thermalization, and entanglement},\ }\href {https://doi.org/10.1103/RevModPhys.91.021001} {\bibfield  {journal} {\bibinfo  {journal} {Reviews of Modern Physics}\ }\textbf {\bibinfo {volume} {91}},\ \bibinfo {pages} {021001} (\bibinfo {year} {2019})}\BibitemShut {NoStop}%
\bibitem [{\citenamefont {Mark}\ \emph {et~al.}(2024)\citenamefont {Mark}, \citenamefont {Surace}, \citenamefont {Elben}, \citenamefont {Shaw}, \citenamefont {Choi}, \citenamefont {Refael}, \citenamefont {Endres},\ and\ \citenamefont {Choi}}]{MSE+24}%
  \BibitemOpen
  \bibfield  {author} {\bibinfo {author} {\bibfnamefont {D.~K.}\ \bibnamefont {Mark}}, \bibinfo {author} {\bibfnamefont {F.}~\bibnamefont {Surace}}, \bibinfo {author} {\bibfnamefont {A.}~\bibnamefont {Elben}}, \bibinfo {author} {\bibfnamefont {A.~L.}\ \bibnamefont {Shaw}}, \bibinfo {author} {\bibfnamefont {J.}~\bibnamefont {Choi}}, \bibinfo {author} {\bibfnamefont {G.}~\bibnamefont {Refael}}, \bibinfo {author} {\bibfnamefont {M.}~\bibnamefont {Endres}},\ and\ \bibinfo {author} {\bibfnamefont {S.}~\bibnamefont {Choi}},\ }\bibfield  {title} {\bibinfo {title} {Maximum entropy principle in deep thermalization and in {H}ilbert-space ergodicity},\ }\href {https://doi.org/10.1103/PhysRevX.14.041051} {\bibfield  {journal} {\bibinfo  {journal} {Physical Review X}\ }\textbf {\bibinfo {volume} {14}},\ \bibinfo {pages} {041051} (\bibinfo {year} {2024})}\BibitemShut {NoStop}%
\bibitem [{\citenamefont {Canosa}\ \emph {et~al.}(1990)\citenamefont {Canosa}, \citenamefont {Rossignoli},\ and\ \citenamefont {Plastino}}]{CRP90}%
  \BibitemOpen
  \bibfield  {author} {\bibinfo {author} {\bibfnamefont {N.}~\bibnamefont {Canosa}}, \bibinfo {author} {\bibfnamefont {R.}~\bibnamefont {Rossignoli}},\ and\ \bibinfo {author} {\bibfnamefont {A.}~\bibnamefont {Plastino}},\ }\bibfield  {title} {\bibinfo {title} {Maximum entropy principle for many-body ground states},\ }\href {https://doi.org/https://doi.org/10.1016/0375-9474(90)90083-X} {\bibfield  {journal} {\bibinfo  {journal} {Nuclear Physics A}\ }\textbf {\bibinfo {volume} {512}},\ \bibinfo {pages} {492} (\bibinfo {year} {1990})}\BibitemShut {NoStop}%
\bibitem [{\citenamefont {Abe}\ and\ \citenamefont {Rajagopal}(1999)}]{AR99}%
  \BibitemOpen
  \bibfield  {author} {\bibinfo {author} {\bibfnamefont {S.}~\bibnamefont {Abe}}\ and\ \bibinfo {author} {\bibfnamefont {A.~K.}\ \bibnamefont {Rajagopal}},\ }\bibfield  {title} {\bibinfo {title} {Quantum entanglement inferred by the principle of maximum nonadditive entropy},\ }\href {https://doi.org/10.1103/PhysRevA.60.3461} {\bibfield  {journal} {\bibinfo  {journal} {Physical Review A}\ }\textbf {\bibinfo {volume} {60}},\ \bibinfo {pages} {3461} (\bibinfo {year} {1999})}\BibitemShut {NoStop}%
\bibitem [{\citenamefont {Ziman}(2008)}]{Zim08}%
  \BibitemOpen
  \bibfield  {author} {\bibinfo {author} {\bibfnamefont {M.}~\bibnamefont {Ziman}},\ }\bibfield  {title} {\bibinfo {title} {Incomplete quantum process tomography and principle of maximal entropy},\ }\href {https://doi.org/10.1103/PhysRevA.78.032118} {\bibfield  {journal} {\bibinfo  {journal} {Physical Review A}\ }\textbf {\bibinfo {volume} {78}},\ \bibinfo {pages} {032118} (\bibinfo {year} {2008})}\BibitemShut {NoStop}%
\bibitem [{\citenamefont {Gupta}\ \emph {et~al.}(2021)\citenamefont {Gupta}, \citenamefont {Xia}, \citenamefont {Levine},\ and\ \citenamefont {Kais}}]{GXLK21}%
  \BibitemOpen
  \bibfield  {author} {\bibinfo {author} {\bibfnamefont {R.}~\bibnamefont {Gupta}}, \bibinfo {author} {\bibfnamefont {R.}~\bibnamefont {Xia}}, \bibinfo {author} {\bibfnamefont {R.~D.}\ \bibnamefont {Levine}},\ and\ \bibinfo {author} {\bibfnamefont {S.}~\bibnamefont {Kais}},\ }\bibfield  {title} {\bibinfo {title} {Maximal entropy approach for quantum state tomography},\ }\href {https://doi.org/10.1103/PRXQuantum.2.010318} {\bibfield  {journal} {\bibinfo  {journal} {PRX Quantum}\ }\textbf {\bibinfo {volume} {2}},\ \bibinfo {pages} {010318} (\bibinfo {year} {2021})}\BibitemShut {NoStop}%
\bibitem [{\citenamefont {Bu}\ \emph {et~al.}(2023)\citenamefont {Bu}, \citenamefont {Gu},\ and\ \citenamefont {Jaffe}}]{BGJ23}%
  \BibitemOpen
  \bibfield  {author} {\bibinfo {author} {\bibfnamefont {K.}~\bibnamefont {Bu}}, \bibinfo {author} {\bibfnamefont {W.}~\bibnamefont {Gu}},\ and\ \bibinfo {author} {\bibfnamefont {A.}~\bibnamefont {Jaffe}},\ }\bibfield  {title} {\bibinfo {title} {Quantum entropy and central limit theorem},\ }\bibfield  {journal} {\bibinfo  {journal} {Proceedings of the National Academy of Sciences}\ }\textbf {\bibinfo {volume} {120}},\ \href {https://doi.org/10.1073/pnas.2304589120} {10.1073/pnas.2304589120} (\bibinfo {year} {2023})\BibitemShut {NoStop}%
\bibitem [{\citenamefont {Scarpa}\ \emph {et~al.}(2025)\citenamefont {Scarpa}, \citenamefont {Alhajri}, \citenamefont {Vedral},\ and\ \citenamefont {Anza}}]{SAVA25}%
  \BibitemOpen
  \bibfield  {author} {\bibinfo {author} {\bibfnamefont {L.}~\bibnamefont {Scarpa}}, \bibinfo {author} {\bibfnamefont {A.}~\bibnamefont {Alhajri}}, \bibinfo {author} {\bibfnamefont {V.}~\bibnamefont {Vedral}},\ and\ \bibinfo {author} {\bibfnamefont {F.}~\bibnamefont {Anza}},\ }\href {https://arxiv.org/abs/2309.15173} {\bibinfo {title} {Observable statistical mechanics}} (\bibinfo {year} {2025}),\ \Eprint {https://arxiv.org/abs/2309.15173} {arXiv:2309.15173 [quant-ph]} \BibitemShut {NoStop}%
\bibitem [{\citenamefont {Rigol}\ \emph {et~al.}(2008)\citenamefont {Rigol}, \citenamefont {Dunjko},\ and\ \citenamefont {Olshanii}}]{RDO08}%
  \BibitemOpen
  \bibfield  {author} {\bibinfo {author} {\bibfnamefont {M.}~\bibnamefont {Rigol}}, \bibinfo {author} {\bibfnamefont {V.}~\bibnamefont {Dunjko}},\ and\ \bibinfo {author} {\bibfnamefont {M.}~\bibnamefont {Olshanii}},\ }\bibfield  {title} {\bibinfo {title} {Thermalization and its mechanism for generic isolated quantum systems},\ }\href {https://doi.org/10.1038/nature06838} {\bibfield  {journal} {\bibinfo  {journal} {Nature}\ }\textbf {\bibinfo {volume} {452}},\ \bibinfo {pages} {854–858} (\bibinfo {year} {2008})}\BibitemShut {NoStop}%
\bibitem [{\citenamefont {Ba\~nuls}\ \emph {et~al.}(2011)\citenamefont {Ba\~nuls}, \citenamefont {Cirac},\ and\ \citenamefont {Hastings}}]{BCH11}%
  \BibitemOpen
  \bibfield  {author} {\bibinfo {author} {\bibfnamefont {M.~C.}\ \bibnamefont {Ba\~nuls}}, \bibinfo {author} {\bibfnamefont {J.~I.}\ \bibnamefont {Cirac}},\ and\ \bibinfo {author} {\bibfnamefont {M.~B.}\ \bibnamefont {Hastings}},\ }\bibfield  {title} {\bibinfo {title} {Strong and weak thermalization of infinite nonintegrable quantum systems},\ }\href {https://doi.org/10.1103/PhysRevLett.106.050405} {\bibfield  {journal} {\bibinfo  {journal} {Physical Review Letters}\ }\textbf {\bibinfo {volume} {106}},\ \bibinfo {pages} {050405} (\bibinfo {year} {2011})}\BibitemShut {NoStop}%
\bibitem [{\citenamefont {Doyon}(2017)}]{Ben17}%
  \BibitemOpen
  \bibfield  {author} {\bibinfo {author} {\bibfnamefont {B.}~\bibnamefont {Doyon}},\ }\bibfield  {title} {\bibinfo {title} {Thermalization and pseudolocality in extended quantum systems},\ }\href {https://doi.org/10.1007/s00220-017-2836-7} {\bibfield  {journal} {\bibinfo  {journal} {Communications in Mathematical Physics}\ }\textbf {\bibinfo {volume} {351}},\ \bibinfo {pages} {155} (\bibinfo {year} {2017})}\BibitemShut {NoStop}%
\bibitem [{\citenamefont {Mori}\ \emph {et~al.}(2018)\citenamefont {Mori}, \citenamefont {Ikeda}, \citenamefont {Kaminishi},\ and\ \citenamefont {Ueda}}]{MIKU18}%
  \BibitemOpen
  \bibfield  {author} {\bibinfo {author} {\bibfnamefont {T.}~\bibnamefont {Mori}}, \bibinfo {author} {\bibfnamefont {T.~N.}\ \bibnamefont {Ikeda}}, \bibinfo {author} {\bibfnamefont {E.}~\bibnamefont {Kaminishi}},\ and\ \bibinfo {author} {\bibfnamefont {M.}~\bibnamefont {Ueda}},\ }\bibfield  {title} {\bibinfo {title} {Thermalization and prethermalization in isolated quantum systems: a theoretical overview},\ }\href {https://doi.org/10.1088/1361-6455/aabcdf} {\bibfield  {journal} {\bibinfo  {journal} {Journal of Physics B: Atomic, Molecular and Optical Physics}\ }\textbf {\bibinfo {volume} {51}},\ \bibinfo {pages} {112001} (\bibinfo {year} {2018})}\BibitemShut {NoStop}%
\bibitem [{\citenamefont {Mandal}\ \emph {et~al.}(2018)\citenamefont {Mandal}, \citenamefont {Paranjape},\ and\ \citenamefont {Sorokhaibam}}]{MPS18}%
  \BibitemOpen
  \bibfield  {author} {\bibinfo {author} {\bibfnamefont {G.}~\bibnamefont {Mandal}}, \bibinfo {author} {\bibfnamefont {S.}~\bibnamefont {Paranjape}},\ and\ \bibinfo {author} {\bibfnamefont {N.}~\bibnamefont {Sorokhaibam}},\ }\bibfield  {title} {\bibinfo {title} {Thermalization in {2D} critical quench and uv/ir mixing},\ }\href {https://doi.org/10.1007/JHEP01(2018)027} {\bibfield  {journal} {\bibinfo  {journal} {Journal of High Energy Physics}\ }\textbf {\bibinfo {volume} {2018}},\ \bibinfo {pages} {27} (\bibinfo {year} {2018})}\BibitemShut {NoStop}%
\bibitem [{\citenamefont {Pathria}\ and\ \citenamefont {Beale}(2021)}]{PB21}%
  \BibitemOpen
  \bibfield  {author} {\bibinfo {author} {\bibfnamefont {R.~K.}\ \bibnamefont {Pathria}}\ and\ \bibinfo {author} {\bibfnamefont {P.~D.}\ \bibnamefont {Beale}},\ }\href {https://doi.org/10.1016/c2017-0-01713-5} {\emph {\bibinfo {title} {Statistical Mechanics: International Series of Monographs in Natural Philosophy}}}\ (\bibinfo  {publisher} {Elsevier},\ \bibinfo {year} {2021})\BibitemShut {NoStop}%
\bibitem [{\citenamefont {Devulapalli}\ \emph {et~al.}(2025)\citenamefont {Devulapalli}, \citenamefont {Mooney},\ and\ \citenamefont {Watson}}]{DMW25}%
  \BibitemOpen
  \bibfield  {author} {\bibinfo {author} {\bibfnamefont {D.}~\bibnamefont {Devulapalli}}, \bibinfo {author} {\bibfnamefont {T.~C.}\ \bibnamefont {Mooney}},\ and\ \bibinfo {author} {\bibfnamefont {J.~D.}\ \bibnamefont {Watson}},\ }\href {https://arxiv.org/abs/2507.00405} {\bibinfo {title} {The complexity of thermalization in finite quantum systems}} (\bibinfo {year} {2025}),\ \bibinfo {note} {arXiv:2507.00405}\BibitemShut {NoStop}%
\bibitem [{\citenamefont {Devetak}\ \emph {et~al.}(2006)\citenamefont {Devetak}, \citenamefont {Junge}, \citenamefont {King},\ and\ \citenamefont {Ruskai}}]{DJKR06}%
  \BibitemOpen
  \bibfield  {author} {\bibinfo {author} {\bibfnamefont {I.}~\bibnamefont {Devetak}}, \bibinfo {author} {\bibfnamefont {M.}~\bibnamefont {Junge}}, \bibinfo {author} {\bibfnamefont {C.}~\bibnamefont {King}},\ and\ \bibinfo {author} {\bibfnamefont {M.~B.}\ \bibnamefont {Ruskai}},\ }\bibfield  {title} {\bibinfo {title} {Multiplicativity of completely bounded p-norms implies a new additivity result},\ }\href {https://doi.org/10.1007/s00220-006-0034-0} {\bibfield  {journal} {\bibinfo  {journal} {Communications in Mathematical Physics}\ }\textbf {\bibinfo {volume} {266}},\ \bibinfo {pages} {37–63} (\bibinfo {year} {2006})}\BibitemShut {NoStop}%
\bibitem [{\citenamefont {Gour}\ and\ \citenamefont {Wilde}(2021)}]{GW21}%
  \BibitemOpen
  \bibfield  {author} {\bibinfo {author} {\bibfnamefont {G.}~\bibnamefont {Gour}}\ and\ \bibinfo {author} {\bibfnamefont {M.~M.}\ \bibnamefont {Wilde}},\ }\bibfield  {title} {\bibinfo {title} {Entropy of a quantum channel},\ }\bibfield  {journal} {\bibinfo  {journal} {Physical Review Research}\ }\textbf {\bibinfo {volume} {3}},\ \href {https://doi.org/10.1103/physrevresearch.3.023096} {10.1103/physrevresearch.3.023096} (\bibinfo {year} {2021})\BibitemShut {NoStop}%
\bibitem [{\citenamefont {Badhani}\ \emph {et~al.}(2026)\citenamefont {Badhani}, \citenamefont {GS}, \citenamefont {Choudhary}, \citenamefont {Anand},\ and\ \citenamefont {Das}}]{BSC+25}%
  \BibitemOpen
  \bibfield  {author} {\bibinfo {author} {\bibfnamefont {H.}~\bibnamefont {Badhani}}, \bibinfo {author} {\bibfnamefont {D.}~\bibnamefont {GS}}, \bibinfo {author} {\bibfnamefont {S.}~\bibnamefont {Choudhary}}, \bibinfo {author} {\bibfnamefont {V.}~\bibnamefont {Anand}},\ and\ \bibinfo {author} {\bibfnamefont {S.}~\bibnamefont {Das}},\ }\bibfield  {title} {\bibinfo {title} {Erasure cost of a quantum process: a thermodynamic meaning of the dynamical min-entropy},\ }\href {https://doi.org/10.1088/2058-9565/ae34e2} {\bibfield  {journal} {\bibinfo  {journal} {Quantum Science and Technology}\ }\textbf {\bibinfo {volume} {11}},\ \bibinfo {pages} {015038} (\bibinfo {year} {2026})},\ \bibinfo {note} {see also arXiv:2506.05307}\BibitemShut {NoStop}%
\bibitem [{\citenamefont {Faist}\ \emph {et~al.}(2021)\citenamefont {Faist}, \citenamefont {Berta},\ and\ \citenamefont {Brandao}}]{FBB21}%
  \BibitemOpen
  \bibfield  {author} {\bibinfo {author} {\bibfnamefont {P.}~\bibnamefont {Faist}}, \bibinfo {author} {\bibfnamefont {M.}~\bibnamefont {Berta}},\ and\ \bibinfo {author} {\bibfnamefont {F.~G. S.~L.}\ \bibnamefont {Brandao}},\ }\bibfield  {title} {\bibinfo {title} {Thermodynamic implementations of quantum processes},\ }\href {https://doi.org/10.1007/s00220-021-04107-w} {\bibfield  {journal} {\bibinfo  {journal} {Communications in Mathematical Physics}\ }\textbf {\bibinfo {volume} {384}},\ \bibinfo {pages} {1709–1750} (\bibinfo {year} {2021})}\BibitemShut {NoStop}%
\bibitem [{\citenamefont {Chiribella}\ \emph {et~al.}(2021)\citenamefont {Chiribella}, \citenamefont {Yang},\ and\ \citenamefont {Renner}}]{CYR21}%
  \BibitemOpen
  \bibfield  {author} {\bibinfo {author} {\bibfnamefont {G.}~\bibnamefont {Chiribella}}, \bibinfo {author} {\bibfnamefont {Y.}~\bibnamefont {Yang}},\ and\ \bibinfo {author} {\bibfnamefont {R.}~\bibnamefont {Renner}},\ }\bibfield  {title} {\bibinfo {title} {Fundamental energy requirement of reversible quantum operations},\ }\href {https://doi.org/10.1103/PhysRevX.11.021014} {\bibfield  {journal} {\bibinfo  {journal} {Physical Review X}\ }\textbf {\bibinfo {volume} {11}},\ \bibinfo {pages} {021014} (\bibinfo {year} {2021})}\BibitemShut {NoStop}%
\bibitem [{\citenamefont {Badhani}\ and\ \citenamefont {Das}(2026)}]{BD26}%
  \BibitemOpen
  \bibfield  {author} {\bibinfo {author} {\bibfnamefont {H.}~\bibnamefont {Badhani}}\ and\ \bibinfo {author} {\bibfnamefont {S.}~\bibnamefont {Das}},\ }\href {https://arxiv.org/abs/2604.01217} {\bibinfo {title} {Conditional channel entropy sets fundamental limits on thermodynamic quantum information processing}} (\bibinfo {year} {2026}),\ \Eprint {https://arxiv.org/abs/2604.01217} {arXiv:2604.01217 [quant-ph]} \BibitemShut {NoStop}%
\bibitem [{\citenamefont {Kraus}(1983)}]{Kra83}%
  \BibitemOpen
  \bibfield  {author} {\bibinfo {author} {\bibfnamefont {K.}~\bibnamefont {Kraus}},\ }\href {https://books.google.co.in/books?id=fRBBAQAAIAAJ} {\emph {\bibinfo {title} {States, effects, and operations}}}\ (\bibinfo  {publisher} {Springer, Berlin},\ \bibinfo {year} {1983})\BibitemShut {NoStop}%
\bibitem [{\citenamefont {Sudarshan}(1985)}]{Sud85}%
  \BibitemOpen
  \bibfield  {author} {\bibinfo {author} {\bibfnamefont {E.~C.~G.}\ \bibnamefont {Sudarshan}},\ }\href {{http://inis.iaea.org/search/search.aspx?orig_q=RN:17063283}} {\emph {\bibinfo {title} {Quantum measurement and dynamical maps}}}\ (\bibinfo  {publisher} {Cambridge University Press},\ \bibinfo {address} {Cambridge},\ \bibinfo {year} {1985})\BibitemShut {NoStop}%
\bibitem [{\citenamefont {Stratton}\ \emph {et~al.}(2024)\citenamefont {Stratton}, \citenamefont {Hsieh},\ and\ \citenamefont {Skrzypczyk}}]{SHS24}%
  \BibitemOpen
  \bibfield  {author} {\bibinfo {author} {\bibfnamefont {B.}~\bibnamefont {Stratton}}, \bibinfo {author} {\bibfnamefont {C.-Y.}\ \bibnamefont {Hsieh}},\ and\ \bibinfo {author} {\bibfnamefont {P.}~\bibnamefont {Skrzypczyk}},\ }\bibfield  {title} {\bibinfo {title} {Dynamical resource theory of informational nonequilibrium preservability},\ }\href {https://doi.org/10.1103/PhysRevLett.132.110202} {\bibfield  {journal} {\bibinfo  {journal} {Physical Review Letters}\ }\textbf {\bibinfo {volume} {132}},\ \bibinfo {pages} {110202} (\bibinfo {year} {2024})}\BibitemShut {NoStop}%
\bibitem [{\citenamefont {Brandão}\ \emph {et~al.}(2015)\citenamefont {Brandão}, \citenamefont {Horodecki}, \citenamefont {Ng}, \citenamefont {Oppenheim},\ and\ \citenamefont {Wehner}}]{BHN+15}%
  \BibitemOpen
  \bibfield  {author} {\bibinfo {author} {\bibfnamefont {F.}~\bibnamefont {Brandão}}, \bibinfo {author} {\bibfnamefont {M.}~\bibnamefont {Horodecki}}, \bibinfo {author} {\bibfnamefont {N.}~\bibnamefont {Ng}}, \bibinfo {author} {\bibfnamefont {J.}~\bibnamefont {Oppenheim}},\ and\ \bibinfo {author} {\bibfnamefont {S.}~\bibnamefont {Wehner}},\ }\bibfield  {title} {\bibinfo {title} {The second laws of quantum thermodynamics},\ }\href {https://doi.org/10.1073/pnas.1411728112} {\bibfield  {journal} {\bibinfo  {journal} {Proceedings of the National Academy of Sciences}\ }\textbf {\bibinfo {volume} {112}},\ \bibinfo {pages} {3275–3279} (\bibinfo {year} {2015})}\BibitemShut {NoStop}%
\bibitem [{\citenamefont {Sparaciari}\ \emph {et~al.}(2020)\citenamefont {Sparaciari}, \citenamefont {del Rio}, \citenamefont {Scandolo}, \citenamefont {Faist},\ and\ \citenamefont {Oppenheim}}]{SdS+20}%
  \BibitemOpen
  \bibfield  {author} {\bibinfo {author} {\bibfnamefont {C.}~\bibnamefont {Sparaciari}}, \bibinfo {author} {\bibfnamefont {L.}~\bibnamefont {del Rio}}, \bibinfo {author} {\bibfnamefont {C.~M.}\ \bibnamefont {Scandolo}}, \bibinfo {author} {\bibfnamefont {P.}~\bibnamefont {Faist}},\ and\ \bibinfo {author} {\bibfnamefont {J.}~\bibnamefont {Oppenheim}},\ }\bibfield  {title} {\bibinfo {title} {The first law of general quantum resource theories},\ }\href {https://doi.org/10.22331/q-2020-04-30-259} {\bibfield  {journal} {\bibinfo  {journal} {Quantum}\ }\textbf {\bibinfo {volume} {4}},\ \bibinfo {pages} {259} (\bibinfo {year} {2020})}\BibitemShut {NoStop}%
\bibitem [{\citenamefont {Badhani}\ \emph {et~al.}(2025{\natexlab{a}})\citenamefont {Badhani}, \citenamefont {S},\ and\ \citenamefont {Das}}]{BGDb}%
  \BibitemOpen
  \bibfield  {author} {\bibinfo {author} {\bibfnamefont {H.}~\bibnamefont {Badhani}}, \bibinfo {author} {\bibfnamefont {D.~G.}\ \bibnamefont {S}},\ and\ \bibinfo {author} {\bibfnamefont {S.}~\bibnamefont {Das}},\ }\href {https://arxiv.org/abs/2510.23731} {\bibinfo {title} {Thermodynamic work capacity of quantum information processing}} (\bibinfo {year} {2025}{\natexlab{a}}),\ \Eprint {https://arxiv.org/abs/2510.23731} {arXiv:2510.23731 [quant-ph]} \BibitemShut {NoStop}%
\bibitem [{\citenamefont {Sohail}\ \emph {et~al.}(2026)\citenamefont {Sohail}, \citenamefont {Pandey}, \citenamefont {Singh},\ and\ \citenamefont {Das}}]{SPSD25}%
  \BibitemOpen
  \bibfield  {author} {\bibinfo {author} {\bibnamefont {Sohail}}, \bibinfo {author} {\bibfnamefont {V.}~\bibnamefont {Pandey}}, \bibinfo {author} {\bibfnamefont {U.}~\bibnamefont {Singh}},\ and\ \bibinfo {author} {\bibfnamefont {S.}~\bibnamefont {Das}},\ }\bibfield  {title} {\bibinfo {title} {Fundamental limitations on the recoverability of quantum processes},\ }\href {https://doi.org/10.1007/s00023-025-01590-y} {\bibfield  {journal} {\bibinfo  {journal} {Annales Henri Poincar{\'e}}\ }\textbf {\bibinfo {volume} {27}},\ \bibinfo {pages} {933} (\bibinfo {year} {2026})}\BibitemShut {NoStop}%
\bibitem [{\citenamefont {Yang}\ \emph {et~al.}(2019)\citenamefont {Yang}, \citenamefont {Horodecki},\ and\ \citenamefont {Winter}}]{YHW19}%
  \BibitemOpen
  \bibfield  {author} {\bibinfo {author} {\bibfnamefont {D.}~\bibnamefont {Yang}}, \bibinfo {author} {\bibfnamefont {K.}~\bibnamefont {Horodecki}},\ and\ \bibinfo {author} {\bibfnamefont {A.}~\bibnamefont {Winter}},\ }\bibfield  {title} {\bibinfo {title} {Distributed private randomness distillation},\ }\href {https://doi.org/10.1103/PhysRevLett.123.170501} {\bibfield  {journal} {\bibinfo  {journal} {Physical Review Letters}\ }\textbf {\bibinfo {volume} {123}},\ \bibinfo {pages} {170501} (\bibinfo {year} {2019})}\BibitemShut {NoStop}%
\bibitem [{\citenamefont {Cerf}\ and\ \citenamefont {Adami}(1997)}]{CA97}%
  \BibitemOpen
  \bibfield  {author} {\bibinfo {author} {\bibfnamefont {N.~J.}\ \bibnamefont {Cerf}}\ and\ \bibinfo {author} {\bibfnamefont {C.}~\bibnamefont {Adami}},\ }\bibfield  {title} {\bibinfo {title} {Negative entropy and information in quantum mechanics},\ }\href {https://doi.org/10.1103/PhysRevLett.79.5194} {\bibfield  {journal} {\bibinfo  {journal} {Physical Review Letters}\ }\textbf {\bibinfo {volume} {79}},\ \bibinfo {pages} {5194} (\bibinfo {year} {1997})}\BibitemShut {NoStop}%
\bibitem [{\citenamefont {Shirokov}\ and\ \citenamefont {Bulinski}(2020)}]{SB20}%
  \BibitemOpen
  \bibfield  {author} {\bibinfo {author} {\bibfnamefont {M.~E.}\ \bibnamefont {Shirokov}}\ and\ \bibinfo {author} {\bibfnamefont {A.~V.}\ \bibnamefont {Bulinski}},\ }\bibfield  {title} {\bibinfo {title} {On quantum channels and operations preserving finiteness of the von {N}eumann entropy},\ }\href {https://doi.org/10.1134/s1995080220120392} {\bibfield  {journal} {\bibinfo  {journal} {Lobachevskii Journal of Mathematics}\ }\textbf {\bibinfo {volume} {41}},\ \bibinfo {pages} {2383–2396} (\bibinfo {year} {2020})}\BibitemShut {NoStop}%
\bibitem [{\citenamefont {Lenard}(1978)}]{Len78}%
  \BibitemOpen
  \bibfield  {author} {\bibinfo {author} {\bibfnamefont {A.}~\bibnamefont {Lenard}},\ }\bibfield  {title} {\bibinfo {title} {Thermodynamical proof of the {G}ibbs formula for elementary quantum systems},\ }\href {https://doi.org/10.1007/BF01011769} {\bibfield  {journal} {\bibinfo  {journal} {Journal of Statistical Physics}\ }\textbf {\bibinfo {volume} {19}},\ \bibinfo {pages} {575} (\bibinfo {year} {1978})}\BibitemShut {NoStop}%
\bibitem [{\citenamefont {Cooney}\ \emph {et~al.}(2016)\citenamefont {Cooney}, \citenamefont {Mosonyi},\ and\ \citenamefont {Wilde}}]{CMW16}%
  \BibitemOpen
  \bibfield  {author} {\bibinfo {author} {\bibfnamefont {T.}~\bibnamefont {Cooney}}, \bibinfo {author} {\bibfnamefont {M.}~\bibnamefont {Mosonyi}},\ and\ \bibinfo {author} {\bibfnamefont {M.~M.}\ \bibnamefont {Wilde}},\ }\bibfield  {title} {\bibinfo {title} {Strong converse exponents for a quantum channel discrimination problem and quantum‐feedback‐assisted communication},\ }\href {https://doi.org/10.1007/s00220-016-2645-4} {\bibfield  {journal} {\bibinfo  {journal} {Communications in Mathematical Physics}\ }\textbf {\bibinfo {volume} {344}},\ \bibinfo {pages} {797} (\bibinfo {year} {2016})},\ \bibinfo {note} {arXiv:1408.3373}\BibitemShut {NoStop}%
\bibitem [{\citenamefont {Das}\ and\ \citenamefont {Wilde}(2019)}]{DW19}%
  \BibitemOpen
  \bibfield  {author} {\bibinfo {author} {\bibfnamefont {S.}~\bibnamefont {Das}}\ and\ \bibinfo {author} {\bibfnamefont {M.~M.}\ \bibnamefont {Wilde}},\ }\bibfield  {title} {\bibinfo {title} {Quantum rebound capacity},\ }\bibfield  {journal} {\bibinfo  {journal} {Physical Review A}\ }\textbf {\bibinfo {volume} {100}},\ \href {https://doi.org/10.1103/physreva.100.030302} {10.1103/physreva.100.030302} (\bibinfo {year} {2019})\BibitemShut {NoStop}%
\bibitem [{\citenamefont {Yuan}(2019)}]{Yua19}%
  \BibitemOpen
  \bibfield  {author} {\bibinfo {author} {\bibfnamefont {X.}~\bibnamefont {Yuan}},\ }\bibfield  {title} {\bibinfo {title} {Hypothesis testing and entropies of quantum channels},\ }\href {https://doi.org/10.1103/PhysRevA.99.032317} {\bibfield  {journal} {\bibinfo  {journal} {Physical Review A}\ }\textbf {\bibinfo {volume} {99}},\ \bibinfo {pages} {032317} (\bibinfo {year} {2019})}\BibitemShut {NoStop}%
\bibitem [{\citenamefont {Giovannetti}\ \emph {et~al.}(2004)\citenamefont {Giovannetti}, \citenamefont {Guha}, \citenamefont {Lloyd}, \citenamefont {Maccone},\ and\ \citenamefont {Shapiro}}]{GGL+04}%
  \BibitemOpen
  \bibfield  {author} {\bibinfo {author} {\bibfnamefont {V.}~\bibnamefont {Giovannetti}}, \bibinfo {author} {\bibfnamefont {S.}~\bibnamefont {Guha}}, \bibinfo {author} {\bibfnamefont {S.}~\bibnamefont {Lloyd}}, \bibinfo {author} {\bibfnamefont {L.}~\bibnamefont {Maccone}},\ and\ \bibinfo {author} {\bibfnamefont {J.~H.}\ \bibnamefont {Shapiro}},\ }\bibfield  {title} {\bibinfo {title} {Minimum output entropy of bosonic channels: A conjecture},\ }\href {https://doi.org/10.1103/PhysRevA.70.032315} {\bibfield  {journal} {\bibinfo  {journal} {Physical Review A}\ }\textbf {\bibinfo {volume} {70}},\ \bibinfo {pages} {032315} (\bibinfo {year} {2004})}\BibitemShut {NoStop}%
\bibitem [{\citenamefont {De~Palma}\ \emph {et~al.}(2017)\citenamefont {De~Palma}, \citenamefont {Trevisan},\ and\ \citenamefont {Giovannetti}}]{DTG17}%
  \BibitemOpen
  \bibfield  {author} {\bibinfo {author} {\bibfnamefont {G.}~\bibnamefont {De~Palma}}, \bibinfo {author} {\bibfnamefont {D.}~\bibnamefont {Trevisan}},\ and\ \bibinfo {author} {\bibfnamefont {V.}~\bibnamefont {Giovannetti}},\ }\bibfield  {title} {\bibinfo {title} {Gaussian states minimize the output entropy of one-mode quantum {G}aussian channels},\ }\href {https://doi.org/10.1103/PhysRevLett.118.160503} {\bibfield  {journal} {\bibinfo  {journal} {Physical Review Letters}\ }\textbf {\bibinfo {volume} {118}},\ \bibinfo {pages} {160503} (\bibinfo {year} {2017})}\BibitemShut {NoStop}%
\bibitem [{\citenamefont {Liu}\ and\ \citenamefont {Yuan}(2020)}]{LY20}%
  \BibitemOpen
  \bibfield  {author} {\bibinfo {author} {\bibfnamefont {Y.}~\bibnamefont {Liu}}\ and\ \bibinfo {author} {\bibfnamefont {X.}~\bibnamefont {Yuan}},\ }\bibfield  {title} {\bibinfo {title} {Operational resource theory of quantum channels},\ }\bibfield  {journal} {\bibinfo  {journal} {Physical Review Research}\ }\textbf {\bibinfo {volume} {2}},\ \href {https://doi.org/10.1103/physrevresearch.2.012035} {10.1103/physrevresearch.2.012035} (\bibinfo {year} {2020})\BibitemShut {NoStop}%
\bibitem [{\citenamefont {Badhani}\ \emph {et~al.}(2025{\natexlab{b}})\citenamefont {Badhani}, \citenamefont {S},\ and\ \citenamefont {Das}}]{BGD25}%
  \BibitemOpen
  \bibfield  {author} {\bibinfo {author} {\bibfnamefont {H.}~\bibnamefont {Badhani}}, \bibinfo {author} {\bibfnamefont {D.~G.}\ \bibnamefont {S}},\ and\ \bibinfo {author} {\bibfnamefont {S.}~\bibnamefont {Das}},\ }\href {https://arxiv.org/abs/2510.12790} {\bibinfo {title} {Thermodynamics of quantum processes: An operational framework for free energy and reversible athermality}} (\bibinfo {year} {2025}{\natexlab{b}}),\ \Eprint {https://arxiv.org/abs/2510.12790} {arXiv:2510.12790 [quant-ph]} \BibitemShut {NoStop}%
\bibitem [{\citenamefont {Winter}(2017)}]{Win17}%
  \BibitemOpen
  \bibfield  {author} {\bibinfo {author} {\bibfnamefont {A.}~\bibnamefont {Winter}},\ }\href {https://arxiv.org/abs/1712.10267} {\bibinfo {title} {Energy-constrained diamond norm with applications to the uniform continuity of continuous variable channel capacities}} (\bibinfo {year} {2017}),\ \Eprint {https://arxiv.org/abs/1712.10267} {arXiv:1712.10267 [quant-ph]} \BibitemShut {NoStop}%
\bibitem [{\citenamefont {van Luijk}(2025)}]{vL25}%
  \BibitemOpen
  \bibfield  {author} {\bibinfo {author} {\bibfnamefont {L.}~\bibnamefont {van Luijk}},\ }\bibfield  {title} {\bibinfo {title} {Energy-limited quantum dynamics},\ }\bibfield  {journal} {\bibinfo  {journal} {Communications in Mathematical Physics}\ }\textbf {\bibinfo {volume} {406}},\ \href {https://doi.org/10.1007/s00220-025-05282-w} {10.1007/s00220-025-05282-w} (\bibinfo {year} {2025})\BibitemShut {NoStop}%
\bibitem [{\citenamefont {Haag}\ \emph {et~al.}(1967)\citenamefont {Haag}, \citenamefont {Hugenholtz},\ and\ \citenamefont {Winnink}}]{HHW67}%
  \BibitemOpen
  \bibfield  {author} {\bibinfo {author} {\bibfnamefont {R.}~\bibnamefont {Haag}}, \bibinfo {author} {\bibfnamefont {N.~M.}\ \bibnamefont {Hugenholtz}},\ and\ \bibinfo {author} {\bibfnamefont {M.}~\bibnamefont {Winnink}},\ }\bibfield  {title} {\bibinfo {title} {On the equilibrium states in quantum statistical mechanics},\ }\href {https://doi.org/10.1007/BF01646342} {\bibfield  {journal} {\bibinfo  {journal} {Communications in Mathematical Physics}\ }\textbf {\bibinfo {volume} {5}},\ \bibinfo {pages} {215} (\bibinfo {year} {1967})}\BibitemShut {NoStop}%
\bibitem [{\citenamefont {Scarani}\ \emph {et~al.}(2002)\citenamefont {Scarani}, \citenamefont {Ziman}, \citenamefont {\ifmmode \check{S}\else \v{S}\fi{}telmachovi\ifmmode~\check{c}\else \v{c}\fi{}}, \citenamefont {Gisin},\ and\ \citenamefont {Bu\ifmmode~\check{z}\else \v{z}\fi{}ek}}]{SZS+05}%
  \BibitemOpen
  \bibfield  {author} {\bibinfo {author} {\bibfnamefont {V.}~\bibnamefont {Scarani}}, \bibinfo {author} {\bibfnamefont {M.}~\bibnamefont {Ziman}}, \bibinfo {author} {\bibfnamefont {P.}~\bibnamefont {\ifmmode \check{S}\else \v{S}\fi{}telmachovi\ifmmode~\check{c}\else \v{c}\fi{}}}, \bibinfo {author} {\bibfnamefont {N.}~\bibnamefont {Gisin}},\ and\ \bibinfo {author} {\bibfnamefont {V.}~\bibnamefont {Bu\ifmmode~\check{z}\else \v{z}\fi{}ek}},\ }\bibfield  {title} {\bibinfo {title} {Thermalizing quantum machines: Dissipation and entanglement},\ }\href {https://doi.org/10.1103/PhysRevLett.88.097905} {\bibfield  {journal} {\bibinfo  {journal} {Physical Review Letters}\ }\textbf {\bibinfo {volume} {88}},\ \bibinfo {pages} {097905} (\bibinfo {year} {2002})}\BibitemShut {NoStop}%
\bibitem [{\citenamefont {Hallam}\ \emph {et~al.}(2019)\citenamefont {Hallam}, \citenamefont {Morley},\ and\ \citenamefont {Green}}]{HMG19}%
  \BibitemOpen
  \bibfield  {author} {\bibinfo {author} {\bibfnamefont {A.}~\bibnamefont {Hallam}}, \bibinfo {author} {\bibfnamefont {J.~G.}\ \bibnamefont {Morley}},\ and\ \bibinfo {author} {\bibfnamefont {A.~G.}\ \bibnamefont {Green}},\ }\bibfield  {title} {\bibinfo {title} {The {L}yapunov spectra of quantum thermalisation},\ }\href {https://doi.org/10.1038/s41467-019-10336-4} {\bibfield  {journal} {\bibinfo  {journal} {Nature Communications}\ }\textbf {\bibinfo {volume} {10}},\ \bibinfo {pages} {2708} (\bibinfo {year} {2019})}\BibitemShut {NoStop}%
\bibitem [{\citenamefont {Mohan}\ \emph {et~al.}(2022)\citenamefont {Mohan}, \citenamefont {Das},\ and\ \citenamefont {Pati}}]{MDP22}%
  \BibitemOpen
  \bibfield  {author} {\bibinfo {author} {\bibfnamefont {B.}~\bibnamefont {Mohan}}, \bibinfo {author} {\bibfnamefont {S.}~\bibnamefont {Das}},\ and\ \bibinfo {author} {\bibfnamefont {A.~K.}\ \bibnamefont {Pati}},\ }\bibfield  {title} {\bibinfo {title} {Quantum speed limits for information and coherence},\ }\href {https://doi.org/10.1088/1367-2630/ac753c} {\bibfield  {journal} {\bibinfo  {journal} {New Journal of Physics}\ }\textbf {\bibinfo {volume} {24}},\ \bibinfo {pages} {065003} (\bibinfo {year} {2022})}\BibitemShut {NoStop}%
\bibitem [{\citenamefont {Zurek}(2001)}]{Zur01}%
  \BibitemOpen
  \bibfield  {author} {\bibinfo {author} {\bibfnamefont {W.~H.}\ \bibnamefont {Zurek}},\ }\href {https://arxiv.org/abs/quant-ph/0111137} {\bibinfo {title} {Information transfer in quantum measurements: irreversibility and amplification}} (\bibinfo {year} {2001}),\ \bibinfo {note} {arXiv:quant-ph/0111137}\BibitemShut {NoStop}%
\bibitem [{\citenamefont {Horodecki}\ \emph {et~al.}(2003{\natexlab{a}})\citenamefont {Horodecki}, \citenamefont {Horodecki},\ and\ \citenamefont {Oppenheim}}]{HHO03}%
  \BibitemOpen
  \bibfield  {author} {\bibinfo {author} {\bibfnamefont {M.}~\bibnamefont {Horodecki}}, \bibinfo {author} {\bibfnamefont {P.}~\bibnamefont {Horodecki}},\ and\ \bibinfo {author} {\bibfnamefont {J.}~\bibnamefont {Oppenheim}},\ }\bibfield  {title} {\bibinfo {title} {Reversible transformations from pure to mixed states and the unique measure of information},\ }\href {https://doi.org/10.1103/PhysRevA.67.062104} {\bibfield  {journal} {\bibinfo  {journal} {Physical Review A}\ }\textbf {\bibinfo {volume} {67}},\ \bibinfo {pages} {062104} (\bibinfo {year} {2003}{\natexlab{a}})}\BibitemShut {NoStop}%
\bibitem [{\citenamefont {Horodecki}\ \emph {et~al.}(2003{\natexlab{b}})\citenamefont {Horodecki}, \citenamefont {Horodecki}, \citenamefont {Horodecki}, \citenamefont {Horodecki}, \citenamefont {Oppenheim}, \citenamefont {Sen(De)},\ and\ \citenamefont {Sen}}]{HHH+03}%
  \BibitemOpen
  \bibfield  {author} {\bibinfo {author} {\bibfnamefont {M.}~\bibnamefont {Horodecki}}, \bibinfo {author} {\bibfnamefont {K.}~\bibnamefont {Horodecki}}, \bibinfo {author} {\bibfnamefont {P.}~\bibnamefont {Horodecki}}, \bibinfo {author} {\bibfnamefont {R.}~\bibnamefont {Horodecki}}, \bibinfo {author} {\bibfnamefont {J.}~\bibnamefont {Oppenheim}}, \bibinfo {author} {\bibfnamefont {A.}~\bibnamefont {Sen(De)}},\ and\ \bibinfo {author} {\bibfnamefont {U.}~\bibnamefont {Sen}},\ }\bibfield  {title} {\bibinfo {title} {Local information as a resource in distributed quantum systems},\ }\href {https://doi.org/10.1103/PhysRevLett.90.100402} {\bibfield  {journal} {\bibinfo  {journal} {Physical Review Letters}\ }\textbf {\bibinfo {volume} {90}},\ \bibinfo {pages} {100402} (\bibinfo {year} {2003}{\natexlab{b}})}\BibitemShut {NoStop}%
\bibitem [{\citenamefont {Schindler}\ \emph {et~al.}(2025)\citenamefont {Schindler}, \citenamefont {Strasberg}, \citenamefont {Galke}, \citenamefont {Winter},\ and\ \citenamefont {Jabbour}}]{SSG+25}%
  \BibitemOpen
  \bibfield  {author} {\bibinfo {author} {\bibfnamefont {J.}~\bibnamefont {Schindler}}, \bibinfo {author} {\bibfnamefont {P.}~\bibnamefont {Strasberg}}, \bibinfo {author} {\bibfnamefont {N.}~\bibnamefont {Galke}}, \bibinfo {author} {\bibfnamefont {A.}~\bibnamefont {Winter}},\ and\ \bibinfo {author} {\bibfnamefont {M.~G.}\ \bibnamefont {Jabbour}},\ }\href {https://arxiv.org/abs/2503.15612} {\bibinfo {title} {Unification of observational entropy with maximum entropy principles}} (\bibinfo {year} {2025}),\ \Eprint {https://arxiv.org/abs/2503.15612} {arXiv:2503.15612 [quant-ph]} \BibitemShut {NoStop}%
\bibitem [{\citenamefont {Oppenheim}\ \emph {et~al.}(2002)\citenamefont {Oppenheim}, \citenamefont {Horodecki}, \citenamefont {Horodecki},\ and\ \citenamefont {Horodecki}}]{OHHH02}%
  \BibitemOpen
  \bibfield  {author} {\bibinfo {author} {\bibfnamefont {J.}~\bibnamefont {Oppenheim}}, \bibinfo {author} {\bibfnamefont {M.}~\bibnamefont {Horodecki}}, \bibinfo {author} {\bibfnamefont {P.}~\bibnamefont {Horodecki}},\ and\ \bibinfo {author} {\bibfnamefont {R.}~\bibnamefont {Horodecki}},\ }\bibfield  {title} {\bibinfo {title} {Thermodynamical approach to quantifying quantum correlations},\ }\href {https://doi.org/10.1103/PhysRevLett.89.180402} {\bibfield  {journal} {\bibinfo  {journal} {Physical Review Letters}\ }\textbf {\bibinfo {volume} {89}},\ \bibinfo {pages} {180402} (\bibinfo {year} {2002})}\BibitemShut {NoStop}%
\bibitem [{\citenamefont {Horodecki}\ \emph {et~al.}(2005{\natexlab{b}})\citenamefont {Horodecki}, \citenamefont {Horodecki}, \citenamefont {Horodecki}, \citenamefont {Oppenheim}, \citenamefont {Sen(De)}, \citenamefont {Sen},\ and\ \citenamefont {Synak-Radtke}}]{HHH+05}%
  \BibitemOpen
  \bibfield  {author} {\bibinfo {author} {\bibfnamefont {M.}~\bibnamefont {Horodecki}}, \bibinfo {author} {\bibfnamefont {P.}~\bibnamefont {Horodecki}}, \bibinfo {author} {\bibfnamefont {R.}~\bibnamefont {Horodecki}}, \bibinfo {author} {\bibfnamefont {J.}~\bibnamefont {Oppenheim}}, \bibinfo {author} {\bibfnamefont {A.}~\bibnamefont {Sen(De)}}, \bibinfo {author} {\bibfnamefont {U.}~\bibnamefont {Sen}},\ and\ \bibinfo {author} {\bibfnamefont {B.}~\bibnamefont {Synak-Radtke}},\ }\bibfield  {title} {\bibinfo {title} {Local versus nonlocal information in quantum-information theory: Formalism and phenomena},\ }\href {https://doi.org/10.1103/PhysRevA.71.062307} {\bibfield  {journal} {\bibinfo  {journal} {Physical Review A}\ }\textbf {\bibinfo {volume} {71}},\ \bibinfo {pages} {062307} (\bibinfo {year} {2005}{\natexlab{b}})}\BibitemShut {NoStop}%
\bibitem [{\citenamefont {Meng}\ \emph {et~al.}(2024)\citenamefont {Meng}, \citenamefont {Curran}, \citenamefont {Senno}, \citenamefont {Wright}, \citenamefont {Farkas}, \citenamefont {Scarani},\ and\ \citenamefont {Ac\'{\i}n}}]{MCS+24}%
  \BibitemOpen
  \bibfield  {author} {\bibinfo {author} {\bibfnamefont {S.}~\bibnamefont {Meng}}, \bibinfo {author} {\bibfnamefont {F.}~\bibnamefont {Curran}}, \bibinfo {author} {\bibfnamefont {G.}~\bibnamefont {Senno}}, \bibinfo {author} {\bibfnamefont {V.~J.}\ \bibnamefont {Wright}}, \bibinfo {author} {\bibfnamefont {M.}~\bibnamefont {Farkas}}, \bibinfo {author} {\bibfnamefont {V.}~\bibnamefont {Scarani}},\ and\ \bibinfo {author} {\bibfnamefont {A.}~\bibnamefont {Ac\'{\i}n}},\ }\bibfield  {title} {\bibinfo {title} {Maximal intrinsic randomness of a quantum state},\ }\href {https://doi.org/10.1103/PhysRevA.110.L010403} {\bibfield  {journal} {\bibinfo  {journal} {Physical Review A}\ }\textbf {\bibinfo {volume} {110}},\ \bibinfo {pages} {L010403} (\bibinfo {year} {2024})}\BibitemShut {NoStop}%
\bibitem [{\citenamefont {Riera}\ \emph {et~al.}(2012)\citenamefont {Riera}, \citenamefont {Gogolin},\ and\ \citenamefont {Eisert}}]{RGE12}%
  \BibitemOpen
  \bibfield  {author} {\bibinfo {author} {\bibfnamefont {A.}~\bibnamefont {Riera}}, \bibinfo {author} {\bibfnamefont {C.}~\bibnamefont {Gogolin}},\ and\ \bibinfo {author} {\bibfnamefont {J.}~\bibnamefont {Eisert}},\ }\bibfield  {title} {\bibinfo {title} {Thermalization in nature and on a quantum computer},\ }\href {https://doi.org/10.1103/PhysRevLett.108.080402} {\bibfield  {journal} {\bibinfo  {journal} {Physical Review Letters}\ }\textbf {\bibinfo {volume} {108}},\ \bibinfo {pages} {080402} (\bibinfo {year} {2012})}\BibitemShut {NoStop}%
\bibitem [{\citenamefont {Leverrier}\ and\ \citenamefont {Cerf}(2009)}]{LC09}%
  \BibitemOpen
  \bibfield  {author} {\bibinfo {author} {\bibfnamefont {A.}~\bibnamefont {Leverrier}}\ and\ \bibinfo {author} {\bibfnamefont {N.~J.}\ \bibnamefont {Cerf}},\ }\bibfield  {title} {\bibinfo {title} {Quantum de {F}inetti theorem in phase-space representation},\ }\href {https://doi.org/10.1103/PhysRevA.80.010102} {\bibfield  {journal} {\bibinfo  {journal} {Physical Review A}\ }\textbf {\bibinfo {volume} {80}},\ \bibinfo {pages} {010102} (\bibinfo {year} {2009})}\BibitemShut {NoStop}%
\bibitem [{\citenamefont {Le}\ \emph {et~al.}(2021)\citenamefont {Le}, \citenamefont {Winter},\ and\ \citenamefont {Adesso}}]{LWG21}%
  \BibitemOpen
  \bibfield  {author} {\bibinfo {author} {\bibfnamefont {T.~P.}\ \bibnamefont {Le}}, \bibinfo {author} {\bibfnamefont {A.}~\bibnamefont {Winter}},\ and\ \bibinfo {author} {\bibfnamefont {G.}~\bibnamefont {Adesso}},\ }\bibfield  {title} {\bibinfo {title} {Thermality versus objectivity: Can they peacefully coexist?},\ }\href {https://doi.org/10.3390/e23111506} {\bibfield  {journal} {\bibinfo  {journal} {Entropy}\ }\textbf {\bibinfo {volume} {23}},\ \bibinfo {pages} {1506} (\bibinfo {year} {2021})}\BibitemShut {NoStop}%
\bibitem [{\citenamefont {Singh}\ \emph {et~al.}(2024)\citenamefont {Singh}, \citenamefont {Sawicki},\ and\ \citenamefont {Korbicz}}]{SSK24}%
  \BibitemOpen
  \bibfield  {author} {\bibinfo {author} {\bibfnamefont {U.}~\bibnamefont {Singh}}, \bibinfo {author} {\bibfnamefont {A.}~\bibnamefont {Sawicki}},\ and\ \bibinfo {author} {\bibfnamefont {J.~K.}\ \bibnamefont {Korbicz}},\ }\bibfield  {title} {\bibinfo {title} {Pointer states in the {Born-Markov} approximation},\ }\href {https://doi.org/10.1103/PhysRevLett.132.030203} {\bibfield  {journal} {\bibinfo  {journal} {Physical Review Letters}\ }\textbf {\bibinfo {volume} {132}},\ \bibinfo {pages} {030203} (\bibinfo {year} {2024})}\BibitemShut {NoStop}%
\end{thebibliography}%
\end{document}